\documentclass[11pt,fleqn]{article}

\usepackage{amsmath,amssymb,hyperref,amsthm,graphics,epsfig,enumitem,cite}

\hypersetup{
    colorlinks=true,
    linkcolor={blue!90!black},    
    citecolor={blue!90!black},   
    urlcolor=blue      
}

\usepackage{tikz}
\usetikzlibrary{shapes.geometric, backgrounds, scopes}

\newcommand{\todo}[1][\null]{\ensuremath{\clubsuit}}
\newcommand{\checked}[1][\null]{\ensuremath{\diamond}}
\newcommand{\noprint}[1]{}

\newcommand{\Dx}{D_x}
\newcommand{\Dt}{D_t}
\DeclareMathOperator{\Eu}{E_u}

\newcounter{tbn}

\newcounter{mcasenum}

\newtheorem{theorem}{Theorem}

\newtheorem*{proposition*}{Proposition}
{\theoremstyle{definition}

\newtheorem{remark}{Remark}

}

\begin{document}\allowdisplaybreaks
\begin{center}
\par\noindent {\LARGE\bf
Extended group analysis and conservation laws of a class of variable coefficient generalized Kawahara equations
\par}

{\vspace{5mm}\par\noindent\large Olena~Vaneeva$^{1,2}$\footnote{Corresponding author. E-mail: vaneeva@imath.kiev.ua}, Alexander Zhalij$^{1}$, Olena Magda$^{2}$ and Oksana Brahinets$^{3}$
\par\vspace{1mm}\par}
\end{center}

{\par\noindent\it\small
${}^{1}$\ Institute of Mathematics of the National Academy of Sciences of Ukraine,\\[1ex]
$\phantom{{}^1}$\ 3 Tereshchenkivska Str., 01024 Kyiv-4, Ukraine\\[1ex]
${}^2$\ National Technical University of Ukraine “Igor Sikorsky Kyiv Polytechnic Institute”,\\[1ex]
$\phantom{{}^1}$\ 37 Beresteiskyi Ave., 03056 Kyiv, Ukraine  \\[1ex]
${}^3$\ Petro Mohyla Black Sea National University, 10, 68 Desantnykiv Str., 54003 Mykolaiv, Ukraine

}

{\vspace{5mm}\par\noindent\hspace*{5mm}\parbox{150mm}{\small
We review and extend the results on the group analysis of a class of generalized Kawahara equations with time-dependent coefficients. First, we provide an overview of the existing literature on Lie symmetries and Lie-invariant solutions of such equations. We then present a complete description of their transformation properties, including admissible, equivalence, and Lie symmetry transformations. For practical applications, we further extend these results by presenting a complete Lie symmetry classification without simplifying the coefficients via equivalence transformations. Lie reductions are then systematically performed, and several exact solutions are constructed. Low-order local conservation laws are exhaustively classified: every equation in this class admits conservation of mass and the squared $L^2$ norm, whereas energy-type conservation laws exist only for specific coefficient branches that align with the cases singled out by the symmetry classification. Finally, the classification results are enhanced by a study of contractions, which link cases of Lie symmetry extensions together with the associated reductions and conservation laws.
}\vspace{4mm}\par}

{\par\noindent\hspace*{5mm}\parbox{150mm}{\small
\textbf{Keywords:} Kawahara equation; variable coefficients; group classification; equivalence group; admissible transformations; Lie reduction; conservation laws; contractions.\\[1ex]
\textbf{MSC 2020:} 35B06; 35Q53; 37K05; 35C05.
}\vspace{4mm}\par}

\section{Introduction}

Lie symmetry analysis provides an algorithmic framework for constructing reductions of nonlinear partial differential equations and, consequently, for obtaining exact solutions~\cite{Olver1986,Ovsiannikov1982,BlumanShevyakovAnco}. Moreover, Lie symmetries can be used as a selection principle for identifying physically relevant models within a broader class of admissible equations~\cite{FN}. The related investigation of admissible transformations, in particular equivalence transformations, makes it possible to apply equivalence-based methods to the construction of exact solutions for families of similar nonlinear partial differential equations. In this section, we present an overview of results on the study of Lie symmetries of Kawahara-type equations and their various generalizations.

The Kawahara equation
\begin{gather}\label{eqKawahara_classical}
u_t+\alpha u u_x +\beta u_{xxx}
+\gamma u_{xxxxx}=0,
\end{gather}
where $\alpha$, $\beta$, and~$\gamma$ are nonzero constants, is a classical model in solitary wave theory~\cite{Kawahara1972}.
In the usual sense, solitary waves are nonlinear waves of permanent form that decay rapidly (usually exponentially) in their tail regions. However, under critical conditions in dispersive systems (e.g., magneto-acoustic waves in plasmas or waves with surface tension), an unexpected emergence of weakly nonlocal solitary waves can occur. These waves consist of a central core similar to that of classical solitary waves, but they are accompanied by copropagating oscillatory tails that extend indefinitely far from the core with a nonzero constant amplitude.
To explain and describe the properties of these waves, a generalized nonlinear dispersive equation of the Korteweg--de Vries type with an additional fifth-derivative term, namely equation~\eqref{eqKawahara_classical}, was proposed~\cite{Hasimoto,Kawahara1972}.

Generalized models with constant coefficients related to the Kawahara equation appeared later (see, in particular,~\cite{Marchenko1988,Tkachenko&Yakovlev1999}).
For example, long waves in shallow water beneath an ice cover are modeled by the equation
\[u_t+u_x+\alpha u u_x +\beta u_{xxx}
+\gamma u_{xxxxx}=0.\]
This equation can be reduced to the classical Kawahara equation by simple changes of variables:
$\tilde x=x-t$ ($t$ and~$u$ are not transformed)
or $\tilde u=1+\alpha u$ ($t$ and~$x$ are not transformed).

To the best of our knowledge, the Lie symmetries of the Kawahara equation~\eqref{eqKawahara_classical} were first presented in~\cite{Hounkonnou} and Lie reductions are studied in~\cite{Hounkonnou2009}. The maximal Lie symmetry algebra of this equation is $\langle \partial_t, \partial_x, \alpha t\partial_x+\partial_u \rangle$.
These results were also obtained in~\cite{Liu2010f}, where the Lie symmetries of the modified Kawahara equation
\begin{gather}\label{eq_quadratic_Kawahara}
u_t+\alpha u^2 u_x +\beta u_{xxx}
+\gamma u_{xxxxx}=0,
\end{gather}
were found as well, where $\alpha$, $\beta$, and~$\gamma$ are nonzero constants.  The maximal Lie symmetry algebra of this equation is the two-dimensional Abelian algebra $\langle \partial_t, \partial_x\rangle$.
The results on the Lie symmetries and Lie reductions of  equations~\eqref{eqKawahara_classical} and~\eqref{eq_quadratic_Kawahara} appeared also in papers~\cite{BadaliHashemiGhahremani,KhorshidiGoodarzi} without citing previous works containing these symmetries.

In~\cite{Vasicek}, formal and generalized symmetries as well as local conservation laws of Kawahara equations with constant coefficients $\beta$, $\gamma$ and arbitrary nonlinearity $f(u)$,
\[u_t+f(u) u_x +\beta u_{xxx}+\gamma u_{xxxxx}=0,\]
where $f_u\beta\gamma\neq0$, were classified. In particular, it was proven that such equations admit only generalized symmetries which are equivalent to Lie point ones.

For more than a decade, the main attention of researchers has been focused on Kawahara-type models with coefficients depending on the time variable:
\begin{gather}\label{eq_ggKawahara}
u_t+\alpha(t)f(u)u_x+\beta(t)u_{xxx}+\gamma(t)u_{xxxxx}=0,\quad f_u\alpha\beta\gamma\neq0,
\end{gather}
where $f$ is a smooth function of~$u$ with $f_u\neq0$, and $\alpha$, $\beta$, and~$\gamma$ are smooth nonvanishing functions of~$t$, as well as on their various subclasses.

In~\cite{kaur2012a}, Lie symmetries were applied to find exact solutions of the classical and modified Kawahara equations with variable coefficients of the form~\eqref{eq_ggKawahara} with $f(u)=u$ and~$f(u)=u^2$.
Only partial results on Lie symmetries were obtained in that work, since equivalence transformations were not employed.

Due to the use of admissible transformations and the partition of the class into normalized subclasses,
an exhaustive group classification of the subclass of equations~\eqref{eq_ggKawahara} with $f(u)=u^n$, where $n$ is an arbitrary nonzero constant, was carried out in~\cite{KPV2014}.
In addition,~\cite{KPV2014} contains a complete classification of Lie reductions, as well as the construction of some exact solutions and simplest conservation laws for such equations.

The work~\cite{GandariasRosaRecioAnco} is devoted to the search for Lie symmetries and local conservation laws of generalized Kawahara equations of the form~\eqref{eq_ggKawahara}.
To simplify the problem, the arbitrary element was gauged to $\gamma=1$ in that work; however, this choice is not optimal.
In~\cite{VKS2016}, it was shown that the normalization property and the analysis of the type of equivalence group (usual / generalized / extended generalized) allow one to algorithmically choose an optimal gauging.
In~\cite{vane2020c,VMZh2023}, the complete group classification was obtained up to the respective equivalence groups using the gauge $\alpha=1$, and Lie reductions were performed.

In this paper we systematize and substantially extend the results on the
group analysis of the class~\eqref{eq_ggKawahara} obtained in our earlier
conference contributions~\cite{vane2020c,VMZh2023}.

We provide a self-contained and complete description of the
transformation properties of the class~\eqref{eq_ggKawahara}, including
a detailed proof of its partition into two normalized subclasses and
the construction of the respective extended generalized equivalence
groups $\hat{G}^\sim$ and $\hat{G}^\sim_0$ (Section~\ref{sec:admissible}).

The group classification results of~\cite{vane2020c} are enhanced by
identifying the types of the maximal Lie invariance algebras for all
cases in both classification tables (Tables~\ref{Kawahara_tab:1} and~\ref{Kawahara_tab:2}),
using the classification of low-dimensional Lie algebras~\cite{pate1977a}.

The group classification results are further extended by presenting
a complete Lie symmetry classification of the class~\eqref{eq_ggKawahara}
without simplifying the coefficients by equivalence transformations
(Table~\ref{Kawahara_tab:3} and Table~\ref{Kawahara_tab:4}), which is directly applicable to equations arising
in physical models.

The Lie reductions of~\cite{vane2020c,VMZh2023} are presented in a unified and
complete form covering all cases of both normalized subclasses, and the
list of exact solutions is enriched with new examples constructed using the equivalence method
(Section~\ref{sec:reductions}).

We classify the local conservation laws of low order for the equations
from the class~\eqref{eq_ggKawahara} (Section~\ref{sec:CL}). Every
equation of the class admits conservation laws of mass and of the squared $L^2$ norm, while energy-type conservation laws exist only for distinguished
coefficient branches, which agree with the cases singled out by the
symmetry classification. In particular, the conservation laws
constructed for the reciprocal-power, logarithmic and exponential nonlinearities
supplement the classification of~\cite{GandariasRosaRecioAnco}.

We carry out a study of contractions between cases of Lie symmetry
extensions (Section~\ref{sec:contractions}). Two types of contractions
are identified: Type~A connects the power nonlinearity cases ($f=u^n$)
to the exponential nonlinearity cases ($f=e^u$) within Table~\ref{Kawahara_tab:1},
and Type~B connects the logarithmic nonlinearity cases ($f=\ln u$) from
Table~\ref{Kawahara_tab:1} to the linear nonlinearity cases ($f=u+\kappa$) of
Table~\ref{Kawahara_tab:2}. For each contraction we verify that the equations,
algebra generators, ans\"atze, and reduced ODEs all transform consistently
in the limit.

\section{Admissible (form-preserving) transformations}\label{sec:admissible}

It is well known that no general theory exists for the integration of nonlinear partial differential equations (PDEs). Nevertheless, many special cases of complete integration or the construction of particular exact solutions are closely related to suitable changes of variables. Transformation-based methods, including the notable Lie symmetry method, constitute some of the most powerful analytical tools currently available for the study of nonlinear PDEs.

The systematic investigation of transformation properties of classes of nonlinear PDEs was initiated in 1991 by J.G. Kingston and C. Sophocleous~\cite{Kingston&Sophocleous1991}. Later these authors introduced the term form-preserving transformations to describe transformations that relate two particular equations within a given class while preserving the form of the equation and modifying only its arbitrary elements~\cite{Kingston&Sophocleous1998}. In 1992, J.P. Gazeau and P. Winternitz began studying similar transformations in classes of PDEs, referring to them as allowed transformations~\cite{Winternitz92}. Rigorous definitions and a systematic theoretical framework for these concepts were subsequently developed by R.O. Popovych~\cite{popo2006b,popo2010a}, who proposed the term admissible transformations and the related rigorous definition that unifies and formalizes the notion of form-preserving (allowed) transformations.

In brief, an admissible transformation is defined as a triple consisting of two fixed equations from a given class and a nondegenerate transformation that maps one equation to the other. The set of all admissible transformations of a class, equipped with the standard operation of composition, is referred to as the equivalence groupoid~\cite{popo2012a}. Equivalence groupoids play a fundamental role not only in group classification problems but also in a wide range of other studies related to classes of PDEs, including the construction of exact solutions and conservation laws, as well as the analysis of integrability~\cite{Popovych&Vaneeva2010,vane2014b}.
Equivalence transformations, which  play a fundamental role in the group analysis of differential equations, form a distinguished subset of admissible transformations. A crucial distinction is that admissible transformations do not, in general, possess a group structure, whereas equivalence transformations always constitute a group. An equivalence transformation maps any equation from a given class to another equation of the same class, while an admissible transformation may exist only for a specific pair of equations within the class under consideration.

According to L.V. Ovsiannikov, the equivalence group consists of nondegenerate point transformations acting on the independent and dependent variables as well as on the arbitrary elements of the class, with the transformations of the independent and dependent variables being projectable onto the corresponding space~\cite{Ovsiannikov1982}. With the subsequent introduction of other types of equivalence groups, the group possessing these properties is now referred to as the usual equivalence group. If the transformations of the independent and/or dependent variables involve arbitrary elements, the resulting structure is called a generalized equivalence group~\cite{Meleshko1994}. When new arbitrary elements depend nonlocally on the old ones, for example, through integral relations, the corresponding structure is termed an extended equivalence group~\cite{mogran}. A simultaneous relaxation of locality and projectability conditions leads to the notion of an extended generalized equivalence group. More recently, the concept of an effective generalized equivalence group was introduced in~\cite{Opanasenko2017}. This group is defined as a minimal subgroup of the full generalized equivalence group of a given class of PDEs that generates the same equivalence subgroupoid as the entire group.

A class of PDEs is said to be normalized in the usual sense if every admissible transformation within the class is induced by a transformation from its usual equivalence group. Analogous notions of normalization in the generalized, extended, and extended generalized senses are defined similarly~\cite{popo2010a}. If a class is normalized in the generalized sense, then its effective generalized equivalence group generates the entire equivalence groupoid of the class.
Once normalization of a given class is established, the determination of equivalence groupoids for its subclasses becomes significantly simpler, since they are necessarily subgroupoids of the equivalence groupoid of the original class. The rigorous theory of the equivalence groupoids has been developed in~\cite{vane2020b}.

The  study of the equivalence groupoid of the class~\eqref{eq_ggKawahara} using the direct method proposed in~\cite{Kingston&Sophocleous1998} shows that this class is not normalized. However, it is possible to partition it into two normalized subclasses singled out by the conditions $f_{uu}\neq0$ and $f_{uu}=0$.  Each of these two subclasses is normalized in the extended generalized sense, as their equivalence groups $\hat G^{\sim}$ and $\hat G_0^{\sim}$ depend on arbitrary element $\alpha$ and moreover depend on it nonlocally. The results are summarized in the following two statements.

\begin{theorem}\label{theorem_fuune0}
The subclass of the class~\eqref{eq_ggKawahara} singled out by the constraint \mbox{$f_{uu}\neq0$}  is normalized in the extended generalized sense. Its extended generalized equivalence group~$\hat G^{\sim}$ comprises the transformations
\begin{gather*}
\tilde t=T(t),\quad \tilde x=\delta_1(x+\delta_2 A) +\delta_3,\quad
\tilde u=\delta_4 u+\delta_5, \\
\tilde f=\delta_0 \left(f+\delta_2\right),\quad  \tilde\alpha=\dfrac{\delta_1}{\delta_0
T_t}\alpha, \quad \tilde A=\dfrac{\delta_1}{\delta_0}A+\varepsilon_0,\quad
 \tilde\beta=\dfrac{\delta_1{}^3}{T_t}\beta,\quad  \tilde\gamma=\dfrac{\delta_1{}^5}{T_t}\gamma,
\end{gather*}
where  $\varepsilon_0$ and $\delta_j,$ $j=0,1,\dots,5,$  are arbitrary constants with
$\delta_0\delta_1\delta_4\not=0$, and $T(t)$ is an arbitrary smooth function with $T_t\neq0.$
The additional arbitrary element $A$ sa\-tisfies the auxiliary equation  $A_t=\alpha$.
\end{theorem}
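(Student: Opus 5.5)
We will use the direct method of Kingston and Sophocleous: find every admissible transformation between two fixed equations of the subclass with $f_{uu}\neq0$, and check that each one is induced by an element of $\hat G^\sim$. Fix two equations of the form~\eqref{eq_ggKawahara}, with arbitrary elements $(f,\alpha,\beta,\gamma)$ and $(\tilde f,\tilde\alpha,\tilde\beta,\tilde\gamma)$, both with $f_{uu}\tilde f_{\tilde u\tilde u}\neq0$. Let $\tilde t=T(t,x,u)$, $\tilde x=X(t,x,u)$, $\tilde u=U(t,x,u)$ be a point transformation mapping the first equation to the second. First I would use the standard restrictions for evolution equations of order higher than one: $T=T(t)$ and $X_u=0$. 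Then, since the equation is quasilinear in the fifth derivative and contains no terms nonlinear in higher derivatives, the coefficients of terms such as $u_xu_{xxxx}$ must vanish. This gives $X=X(t,x)$ and $U_{uu}=0$, so $U=U^1(t,x)u+U^0(t,x)$.

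Next, substitute these expressions, with $\tilde u_{\tilde x\cdots}$ written in terms of the $x$-derivatives of $u$, into the transformed equation. Replace $u_t$ using the original equation and split with respect to the remaining derivatives. From the coefficients of $u_{xxxxx}$ we get $\tilde\gamma X_x^5=T_t\gamma$. From the terms with $u_{xxxx}$ we get $X_{xx}=0$ and $U^1_x=0$. The fourth derivative is produced both by the chain rule through $X_{xx}$ and by the mixed terms $U^1_x$, so these two conditions have to be disentangled carefully, and the $u_{xxx}$ coefficients help with this. From $u_{xxx}$ we then get $\tilde\beta X_x^3=T_t\beta$ and also $U^0_x=0$ up to lower-order consistency.

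We now have $X=X^1(t)x+X^0(t)$, $U=U^1(t)u+U^0(t)$, and the remaining equation is
\[
U^1_tu+U^0_t-\frac{U^1}{X^1}\bigl(X^1_tx+X^0_t\bigr)u_x
+U^1u_x\Bigl(\frac{T_t}{X^1}\tilde\alpha\,\tilde f(U^1u+U^0)-\alpha f(u)\Bigr)=0 .
\]
Splitting with respect to $u_x$ gives $U^1_t=U^0_t=0$, and
\[
T_t\tilde\alpha\,\tilde f(\tilde u)=X^1\alpha f(u)+X^1_tx+X^0_t .
\]
The right-hand side must not depend on $x$, so $X^1$ is a constant $\delta_1$. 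The resulting equality, $T_t\tilde\alpha\tilde f(\tilde u)=\delta_1\alpha f(u)+X^0_t$, has on the left and right a function of $u$ multiplied by functions of $t$. This is the key step, and it is where the condition $f_{uu}\neq0$ is used. Differentiating with respect to $u$ gives $T_t\tilde\alpha\,\tilde f'\,U^1=\delta_1\alpha f'$, and differentiating again shows that the ratio $\alpha/(T_t\tilde\alpha)$ must be a constant, since $f''\neq0$. (If $f''=0$, this step fails and the ratio may depend on $t$, which is why the subclass $f_{uu}=0$ behaves differently.) Denote this constant by $\delta_0/\delta_1$, so that $\tilde\alpha=\delta_1\alpha/(\delta_0T_t)$. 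Then $\tilde f=\delta_0 f+c$, and $X^0_t=$ const$\cdot\,\alpha$. This forces $X^0=\delta_1\delta_2A+\delta_3$ with $A_t=\alpha$, and correspondingly $\tilde f=\delta_0(f+\delta_2)$. The relation $\tilde A=\delta_1A/\delta_0+\varepsilon_0$ follows from $\tilde A_{\tilde t}=\tilde\alpha$, with the constant of integration $\varepsilon_0$ left free.

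Finally, I would check directly that every transformation in the stated family maps the class onto itself. This shows that the family is indeed the equivalence group, and since every admissible transformation was shown to be of this form, it proves normalization. Because $X^0$ depends on $\alpha$ through the integral $A$, the group is extended generalized. The main obstacle I expect is the bookkeeping in the $u_{xxxx}$ and $u_{xxx}$ splitting that establishes $X_{xx}=U^1_x=U^0_x=0$. The rest of the argument is straightforward.
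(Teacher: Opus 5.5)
There is a genuine gap in the middle of your argument, and it also puts the use of $f_{uu}\neq0$ in the wrong place. Your overall route is the paper's: the direct method, reduction to $\tilde t=T(t)$, $\tilde x=X^1(t)x+X^0(t)$, $\tilde u=U^1u+U^0$, then splitting.

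The error is your claim that the $u_{xxx}$ coefficient yields $U^0_x=0$. It cannot. Once $X_x=X^1(t)$ and $U^1=U^1(t)$, one has $\partial_{\tilde x}^k\tilde u=(X^1)^{-k}\bigl(U^1\partial_x^ku+\partial_x^kU^0\bigr)$. So $U^0$ never enters the coefficients of $u_{xxx}$, $u_{xxxx}$ or $u_{xxxxx}$. Those coefficients come only from the dispersive part and are the same for linear $f$. Use the $u_x$-coefficient relation $T_t\tilde\alpha\tilde f=\alpha fX^1+X^1_tx+X^0_t$. The derivative-free remainder is then
\[
U^0_x\,\alpha f(u)+U^1_t\,u+U^0_t+\beta U^0_{xxx}+\gamma U^0_{xxxxx}=0 .
\]
The conclusion $U^0_x=U^1_t=U^0_t=0$ follows from this only because $1$, $u$, $f(u)$ are linearly independent, i.e.\ because $f_{uu}\neq0$. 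This is where the hypothesis is used. Your displayed ``remaining equation'' leaves out the $U^0_x$ terms and puts $U^1_tu+U^0_t=0$ in their place.

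As a result, your argument never uses $f_{uu}\neq0$ in an essential way. With $U^1,U^0$ constant, one differentiation of $T_t\tilde\alpha\tilde f(U^1u+U^0)=\delta_1\alpha f(u)+X^0_t$ in $u$ already separates the variables, and only $f'\neq0$ is needed. For $f=u$ it would still give the constant ratio $\alpha/(T_t\tilde\alpha)=U^1/\delta_1$. So your parenthetical explanation of why $f''=0$ behaves differently is wrong. Your proof would equally ``show'' that the linear subclass admits only these transformations, which contradicts Theorem~\ref{theorem_fuu=0}: there $\tilde u$ contains a term proportional to $\delta_2x$, and $X^1$ depends on $A(t)$.

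Separately, the ``main obstacle'' you expect is not one. Since $\gamma$ and $\tilde\gamma$ depend on time only, $\tilde\gamma(T)X_x^5=T_t\gamma$ immediately forces $X_{xx}=0$. The $u_{xxxx}$ coefficient is proportional to $X_xU^1_x-2X_{xx}U^1$, so it then gives $U^1_x=0$. The paper avoids all of this by citing the known form of admissible transformations for the superclass $u_t=F(t)u_n+G$.

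Once the zeroth-order splitting is corrected, the rest of your argument goes through as in the paper: $X^1_t=0$ from $x$-independence, the constant ratio, $X^0_t\propto\alpha$, and the introduction of $A$.
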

The  subclass of the class~\eqref{eq_ggKawahara}, singled out by the constraint \mbox{$f_{uu}=0$}, consists of the equations with $f=\kappa_1u+\kappa_0$, where $\kappa_1\neq0$. Since the constant $\kappa_1$ can be absorbed into $\alpha$ by the reparameterization $\tilde f=f/\kappa_1$, $\tilde\alpha=\kappa_1\alpha$, we set $\kappa_1=1$ without loss of generality, i.e. we consider the class of Kawahara equations of the form
\begin{equation}\label{eq_Kawahara_f=u}
u_t+\alpha(t)(u+\kappa_0)u_x+\beta(t)u_{xxx}+\gamma(t)u_{xxxxx}=0,
\end{equation}
where  $\alpha$, $\beta$, and~$\gamma$ are smooth nonzero functions of their arguments and $\kappa_0$ is an arbitrary constant. The following statement is true.
\begin{theorem}\label{theorem_fuu=0}
The class~\eqref{eq_Kawahara_f=u} is normalized in the extended generalized sense.
Its extended generalized equivalence group~$\hat G_0^{\sim}$ is constituted by the transformations of the form
\begin{gather*}
\tilde t=T(t),\quad
\tilde x=\dfrac{\varepsilon_2x+\varepsilon_1A+\varepsilon_0}{\delta_2A+\delta_1},\quad
\tilde u=\dfrac{\varepsilon_2}{\Delta}\left((\delta_2A+\delta_1)u-\delta_2x+\delta_2\kappa_0 A+\varepsilon_3\right),\\
\tilde A=\dfrac{\delta_2'A+\delta_1'}{\delta_2A+\delta_1},\quad\tilde\alpha=\dfrac{\Delta}{T_t(\delta_2A+\delta_1)^2}\alpha, \quad
\tilde \beta=\dfrac{\varepsilon_2{}^3}{T_t(\delta_2A+\delta_1)^3}\beta, \\
\tilde \gamma=\dfrac{\varepsilon_2{}^5}{T_t(\delta_2A+\delta_1)^5}\gamma,\quad \tilde \kappa_0=\dfrac1{\Delta}\left({\delta_1\varepsilon_2\kappa_0+\delta_1\varepsilon_1-\delta_2\varepsilon_0-\varepsilon_3\varepsilon_2}\right),
\end{gather*}
where
 $\delta_j, \delta_j'$ $j=1,2,$ and $\varepsilon_i$, $i=0,1,2,3,$ are arbitrary constants defined up to a nonzero multiplier, with $\Delta=\delta_2'\delta_1-\delta_1'\delta_2\neq0$ and
$\varepsilon_2\not=0$. The additional arbitrary element $A$ sa\-tisfies the auxiliary equation  $A_t=\alpha$.
\end{theorem}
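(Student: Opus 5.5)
The approach is the direct method of Kingston and Sophocleous. I would take two arbitrary equations of class~\eqref{eq_Kawahara_f=u}, with arbitrary elements $(\alpha,\beta,\gamma,\kappa_0)$ and $(\tilde\alpha,\tilde\beta,\tilde\gamma,\tilde\kappa_0)$, and a point transformation $\tilde t=T(t,x,u)$, $\tilde x=X(t,x,u)$, $\tilde u=U(t,x,u)$ mapping one to the other. I would then find all such transformations and check that each belongs to the family displayed in Theorem~\ref{theorem_fuu=0}.

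The first stage uses the standard restrictions for evolution equations of order $\ge2$ (here fifth order). Since the equation is quasilinear in $u_{xxxxx}$ with coefficients depending only on $t$, splitting with respect to the highest derivatives gives:
\begin{itemize}
\item $T=T(t)$ with $T_t\neq0$, and $X_u=0$;
\item $U$ is linear in $u$, i.e. $U=U^1(t,x)u+U^0(t,x)$.
\end{itemize}
This part could also be inherited from the general computation that yields Theorem~\ref{theorem_fuune0}, since that computation does not use $f_{uu}\neq0$ until the nonlinearity is split.

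Substituting into the transformed equation, the coefficients of $u_{xxxxx}$ and $u_{xxx}$ give $\tilde\gamma X_x^5=T_t\gamma$ and $\tilde\beta X_x^3=T_t\beta$. Hence $X_x$ depends only on $t$, so $X=\phi(t)x+\psi(t)$, and $\gamma/\beta\cdot\phi^{-2}$ is fixed. The terms in $u_{xxxx}$ and $u_{xx}$ then force $U^1_x=0$, and the terms in $u_x$ with $u$-dependence give $\tilde\alpha U^1\phi=T_t\alpha\,\phi^{\,2}/\phi$ up to normalisation.

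The main obstacle is the step that distinguishes this class from the $f_{uu}\neq0$ case. There the nonlinearity forces $X_t$ to be constant up to $A$-translations. Here, by contrast, terms of the form $U^0_x\tilde\alpha$ and $X_t$ can balance against the linear term $-\delta_2x$ in $\tilde u$, so $U^0$ may be linear in $x$. Collecting the coefficients of $uu_x$, $u_x$, $u$ and $u^0$, I expect the system
\[
\phi_t=-\frac{\alpha\tilde\alpha}{T_t}\,\phi\,\cdots,\qquad U^0_{xx}=0,
\]
which should reduce to the Riccati-type relation $(1/\phi)_{tt}$ proportional to $\alpha_t/\alpha$. Rewritten in terms of $A$ with $A_t=\alpha$, this says that $1/\phi$ is affine in $A$: $\phi=\varepsilon_2/(\delta_2A+\delta_1)$ up to scaling. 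The remaining equations then integrate to give $\psi$ as a fractional linear expression in $A$ with the same denominator, together with the stated form of $U$.

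The expressions for $\tilde A$ and $\tilde\kappa_0$ come from the $u$-free terms. For $\tilde A$, the condition $\tilde A_{\tilde t}=\tilde\alpha$ is integrated: $\tilde A$ must be a Möbius transform of $A$ whose derivative gives $\Delta/(\delta_2A+\delta_1)^2$. The parameter count matches this with the projective freedom ("defined up to a nonzero multiplier").

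Finally, I would verify the converse: every transformation in the displayed family maps the class into itself. This shows that the family is the group $\hat G_0^{\sim}$ and that it generates all admissible transformations, which is normalization in the extended generalized sense. The dependence on $A$, a nonlocal function of $\alpha$, is what makes the group extended generalized rather than usual.
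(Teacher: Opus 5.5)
Your strategy matches the paper's. You restrict to $\tilde t=T(t)$, $\tilde x=\phi(t)x+\psi(t)$, $\tilde u=U^1(t)u+U^0(t,x)$ (the paper takes this form from the normalized superclass~\eqref{class_gen}) and then split the remaining conditions. The gap is at the step you yourself call the main obstacle. It is only announced (``I expect\dots should reduce to\dots''), and the relations you write there are wrong.

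\begin{itemize}
\item The $uu_x$ coefficient gives $\tilde\alpha U^1T_t=\alpha\phi$. Your $\tilde\alpha U^1\phi=T_t\alpha\phi^2/\phi$ instead says $\tilde\alpha U^1=T_t\alpha$. Combined with the $x$-part of the $u_x$ coefficient, your version would couple $\phi$ to the arbitrary function $T$.
\item The expected system $\phi_t=-\frac{\alpha\tilde\alpha}{T_t}\phi\cdots$ is not what comes out.
\item ``$(1/\phi)_{tt}$ proportional to $\alpha_t/\alpha$'', read with a constant factor, integrates to $(1/\phi)_t=k\ln|\alpha|+c$, not to $1/\phi$ affine in $A$. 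The correct relation is $\bigl((1/\phi)_t/\alpha\bigr)_t=0$.
\end{itemize}

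The missing piece is the following splitting, which uses only first-order relations.

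\begin{enumerate}
\item After substituting the $u_x$-coefficient equation, the derivative-free terms reduce to $\alpha(u+\kappa_0)U^0_x+U^1_tu+U^0_t+\beta U^0_{xxx}+\gamma U^0_{xxxxx}=0$ (cf.~\eqref{ddeteq2}).
\item The $u$-coefficient of this gives $U^0_x=-U^1_t/\alpha$. This is a function of $t$ only, so $U^0$ is affine in $x$.
\item The $u$-free part then reads $\kappa_0\alpha U^0_x+U^0_t=0$, and its $x$-derivative gives $U^0_{xt}=0$. Hence $U^0_x$ is constant and $U^1$ is affine in $A$.
\item The $u_x$ coefficient is $\tilde\alpha T_t(U^1u+U^0+\tilde\kappa_0)=\alpha\phi(u+\kappa_0)+\phi_tx+\psi_t$. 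Splitting it in $u$ and $x$ gives $\tilde\alpha U^1T_t=\alpha\phi$ and $\tilde\alpha T_tU^0_x=\phi_t$.
\item Eliminating $\tilde\alpha T_t$ and using $U^0_x=-U^1_t/\alpha$ yields $(\phi U^1)_t=0$. Thus $\phi=\varepsilon_2/(\delta_2A+\delta_1)$ and $U^1\propto\delta_2A+\delta_1$.
\item The remaining $x$- and $u$-free part gives $\psi_t=\alpha\,c/(\delta_2A+\delta_1)^2$ with $c$ constant. This integrates to the stated $\tilde x$, and comparing constants gives the stated $\tilde\kappa_0$.
\end{enumerate}

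With these relations supplied, the rest of your plan goes through: $\tilde A$ follows from $\tilde A_{\tilde t}=\tilde\alpha$ in the M\"obius form with $\Delta$, and the converse check completes the argument.
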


The structure of the equivalence groupoid of the class~\eqref{eq_ggKawahara} is illustrated in Fig.~1.
\begin{figure}[htbp]
  \centering
 \begin{tikzpicture}[scale=0.8, transform shape]

    \tikzset{
      smooth curve/.style={
        thick,
        black
      },
      math label/.style={
        font=\Large\bfseries,
        execute at begin node=\boldmath
      }
    }

    \draw[smooth curve] (0, 0) ellipse (4.5 and 2.5);

    \node[math label] at (-2.3, 1.2) {$f_{uu} \neq 0$};
    \node[math label] at (-2.5, -0.4) {$\hat G^\sim$};

    \draw[smooth curve, fill=yellow!20, shift={(2, -0.2)}] (0, 0) ellipse (1.8 and 1.3);

    \node[math label] at (2.0, -0.6) {$\hat G_0^\sim$};
    \node[math label] at (2.0, 0.4) {$f_{uu} = 0$};

\end{tikzpicture}

   \caption{Partition of the class~\eqref{eq_ggKawahara} into the subclass $f_{uu}=0$ and its complement $f_{uu}\neq0$, together with the structure of the corresponding equivalence groupoid.}
  \label{fig:my_diagram}
\end{figure}
\begin{proof}[Proof of Theorems~\ref{theorem_fuune0} and~\ref{theorem_fuu=0}]
It was proved in~\cite{vane2014b} that the class
\begin{equation}\label{class_gen}
u_t=F(t)u_n+G(t,x,u,u_1,\dots,u_{n-1}),
\end{equation}
where $F\ne0,$ $G_{u_iu_{n-1}}=0,$ $i=1,\dots,n-1,$ $n\geqslant 2$,
$
u_{n}=\frac{\partial^n u}{\partial x^n},
$
$F$ and $G$ are arbitrary smooth functions of their variables, is normalized in the usual sense.  The transformation components for the variables $t,$
$x$ and $u$ of admissible transformations for the class~\eqref{class_gen} are of the form
\begin{eqnarray}\label{EqEquivtransOfvcKdVlikeSuperclass}
\tilde t=T(t),\quad \tilde x=X^1(t)x+X^0(t),\quad \tilde u=U^1(t,x)u+U^0(t,x),
\end{eqnarray}
where $T$, $X^1$, $X^0$, $U^1$ and $U^0$ are arbitrary smooth functions of their variables with $T_tX^1U^1\neq0$.

As class~\eqref{eq_ggKawahara} is a subclass of class~\eqref{class_gen} with $n=5,$ transformation components for the variables $t,$ $x,$ and $u$ of admissible transformations in~\eqref{eq_ggKawahara} can be sought  in the form~\eqref{EqEquivtransOfvcKdVlikeSuperclass}.
Following  the direct method~\cite{Kingston&Sophocleous1998}, we
suppose that equation~\eqref{eq_ggKawahara} is similar  to an equation from the same class,
\begin{equation}\label{eq_ggKawahara_tilda}
\tilde u_{\tilde t}+\tilde\alpha(\tilde t)\tilde f( \tilde u)\tilde u_{\tilde x}+\tilde\beta(\tilde t) \tilde u_{\tilde x\tilde x\tilde
x}+\tilde\gamma(\tilde t) \tilde u_{\tilde x\tilde x\tilde x\tilde x\tilde x}=0,
\end{equation}
and these two equations are connected by a nondegenerate point~transformation of the form~\eqref{EqEquivtransOfvcKdVlikeSuperclass}.
Rewriting~\eqref{eq_ggKawahara_tilda} in terms of the untilded variables,
we further substitute $u_t=-\alpha(t)f(u)u_x-\beta(t)u_{xxx}-\gamma(t)u_{xxxxx}$ to the derived equation.
Splitting the obtained identity with respect to the derivatives of $u$ leads to the determining
equations on the functions~$T$, $X^1$, $X^0$, $U^1$ and $U^0$, which result in the system
\begin{gather}\label{ddeteq1}
U^1_x=0,\quad \tilde\beta T_t-\beta(X^1)^3=0,\quad \tilde\gamma T_t-\gamma(X^1)^5=0,\\\label{ddeteq2}
U^0_x\alpha f+U^1_tu+\gamma U^0_{xxxxx}+\beta U^0_{xxx}+U^0_t=0,\\\label{ddeteq3}
\tilde\alpha\tilde f T_t=\alpha f X^1+X^1_tx+X^0_t.
\end{gather}

We proceed with the classification of all admissible transformations in the class~\eqref{eq_ggKawahara}.
Equation~\eqref{ddeteq2} implies that such transformations essentially  differ for the cases $f_{uu}\neq0$ and $f_{uu}=0$.

{\bf I.} If $f(u)$ is a nonlinear function, then equations \eqref{ddeteq2}, \eqref{ddeteq3} imply the conditions
\begin{equation*}
U^0_x=U^0_t=U^1_t=X^1_t=0, \quad \tilde\alpha\tilde f T_t=\alpha f X^1+X^0_t.
\end{equation*}
Solving these equations together with equations \eqref{ddeteq1} we find that $X^0_t$ is proportional to the arbitrary element $\alpha,$ so the explicit form of $x$-component of admissible transformations for the subclass of the class~\eqref{eq_ggKawahara} singled out by the condition $f_{uu}\neq0$ will be nonlocal with respect to the arbitrary element $\alpha$. In order for the equivalence transformations to remain point transformations and to be well defined, we extend the tuple of arbitrary elements $(f,\alpha,\beta,\gamma)$  with the additional arbitrary element $A$, that satisfies the auxiliary condition $A_t=\alpha.$
The solution of the resulting system gives the form of the transformations from the equivalence group $\hat G^{\sim}$ presented in Theorem~\ref{theorem_fuune0}.

{\bf II.} If $f(u)=\kappa_1 u+\kappa_0,$ where $\kappa_1$ is a nonzero constant and $\kappa_0$ is an arbitrary constant, then without loss of generality we can further consider the class~\eqref{eq_Kawahara_f=u},
where the real constant $\kappa_0$ and the nonvanishing smooth functions $\alpha(t),$ $\beta(t)$ and $\gamma(t)$ are arbitrary elements.
In this case splitting of equations \eqref{ddeteq2} and~\eqref{ddeteq3} with respect to $u$ results in the following conditions
\begin{gather*}
  U^0_x\alpha +U^1_t=0,\quad U^0_x\alpha \kappa_0 +U^0_t=0,\quad \tilde\alpha U^1T_t=\alpha X^1,\\
  \tilde\alpha T_t(U^0+\tilde \kappa_0)=\alpha \kappa_0 X^1+X^1_tx+X^0_t.
\end{gather*}
Solving these equations together with equations \eqref{ddeteq1}, we get the statement of Theorem~\ref{theorem_fuu=0}.
\end{proof}

\begin{theorem}
The usual equivalence group of the class~\eqref{eq_ggKawahara} consists of the transformations
\begin{gather*}
\tilde t=T(t),\quad \tilde x=\delta_1x +\delta_2,\quad
\tilde u=\delta_3 u+\delta_4, \\
\tilde f=\delta_0 f,\quad  \tilde\alpha=\dfrac{\delta_1}{\delta_0
T_t}\alpha, \quad
 \tilde\beta=\dfrac{\delta_1{}^3}{T_t}\beta,\quad  \tilde\gamma=\dfrac{\delta_1{}^5}{T_t}\gamma,
\end{gather*}
where  $\delta_j,$ $j=0,1,\dots,4,$  are arbitrary constants with
$\delta_0\delta_1\delta_3\not=0$, and $T(t)$ is an arbitrary smooth function with $T_t\neq0$.
\end{theorem}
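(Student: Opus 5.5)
The plan is to use the infinitesimal-free (direct) method on the whole class~\eqref{eq_ggKawahara}. Usual equivalence transformations are point transformations of $(t,x,u,f,\alpha,\beta,\gamma)$ whose $(t,x,u)$-components do not depend on the arbitrary elements and that map every equation of the class into the class. Each such transformation generates, for every equation, an admissible transformation. The determining system \eqref{ddeteq1}--\eqref{ddeteq3} derived in the proof of Theorems~\ref{theorem_fuune0} and~\ref{theorem_fuu=0} therefore holds, with $T,X^1,X^0,U^1,U^0$ independent of $(f,\alpha,\beta,\gamma)$. This system was obtained via the normalization of the superclass~\eqref{class_gen}. The whole task then reduces to requiring that these conditions hold identically for \emph{all} admissible values of the arbitrary elements, and to splitting with respect to them.

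First, from \eqref{ddeteq2}, since $f$ runs over all functions with $f_u\ne0$, including nonlinear ones, and $\alpha,\beta,\gamma$ are arbitrary, I split with respect to $f$, $\alpha$, $\beta$ and $\gamma$. This gives $U^0_x=0$, $U^1_t=0$ and $U^0_t=0$, so $\tilde u=\delta_3u+\delta_4$. Second, in \eqref{ddeteq3}, $\tilde f$ must be a function of $\tilde u$ alone and $\tilde\alpha$ a function of $\tilde t$ alone. Differentiating $\tilde\alpha\tilde fT_t=\alpha fX^1+X^1_tx+X^0_t$ with respect to $x$ (at fixed $t,u$) gives $X^1_t=0$. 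Next, for $X^0$ independent of $\alpha$, I vary $\alpha$ while keeping $t$ fixed. The relation $\tilde f(\delta_3u+\delta_4)=(\alpha X^1f(u)+X^0_t)/(\tilde\alpha T_t)$ must make $\tilde f$ independent of $t$. Taking $f$ nonlinear, the $t$-independence of the ratio of the coefficient of $f$ to the free term forces $X^0_t/(\alpha X^1)$ to be constant for every $\alpha$. Since $X^0$ does not depend on $\alpha$, this constant must vanish, so $X^0_t=0$. Then $\tilde f=\delta_0f$ with $\tilde\alpha=\delta_1\alpha/(\delta_0T_t)$, where $\delta_0$ is a constant because the separation of variables $t$ and $u$ forces it. The $\beta$ and $\gamma$ components follow from \eqref{ddeteq1} with $X^1=\delta_1$.

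Equivalently, I could obtain the result by intersecting Theorem~\ref{theorem_fuune0} with projectability and locality: in $\hat G^\sim$ the $x$-component involves $\delta_2A$, which depends nonlocally on $\alpha$, so projectability forces $\delta_2=0$. This also removes the shift in $\tilde f$ and makes $A$, and hence $\varepsilon_0$, irrelevant. I then check that no transformation from $\hat G_0^\sim$ extends to the whole class. Its $x$-component is projective in $A$, and it is not defined on equations with $f_{uu}\ne0$ unless $\delta_2=0$, $\varepsilon_1=0$, in which case it reduces to the above. Finally, I verify directly that the listed transformations map~\eqref{eq_ggKawahara} into itself for arbitrary $f,\alpha,\beta,\gamma$, which is a routine chain-rule substitution.

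The main obstacle I expect is the rigorous splitting step that kills $X^0_t$ and fixes $\tilde f=\delta_0f$ without the shift $\delta_2$. This requires using that the $(t,x,u)$-components cannot depend on $\alpha$, while the shift in $\tilde f$ in Theorem~\ref{theorem_fuune0} is tied to $X^0_t=\delta_1\delta_2\alpha$. I will handle this by differentiating the identity $X^0_t=\delta_1\delta_2\alpha$ with respect to $\alpha$, in the sense of the equivalence-group jet space, which gives $\delta_2=0$.
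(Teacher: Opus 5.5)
Your proposal is correct and follows essentially the paper's route. The paper gives no separate proof of this theorem: it reads the result off the direct-method system \eqref{ddeteq1}--\eqref{ddeteq3} from the proof of Theorems~\ref{theorem_fuune0} and~\ref{theorem_fuu=0}, i.e., as the part of $\hat G^\sim$ with $\delta_2=0$, where the $x$-component no longer involves the nonlocal element $A$. Both your splitting with respect to the arbitrary elements (in particular, $X^0_t$ being proportional to every admissible $\alpha$ forces $X^0_t=0$) and your intersection argument reproduce exactly this. One point to tighten: the constancy of $\delta_0$ across different equations follows from the point (local) nature of usual equivalence transformations, not merely from separating $t$ and $u$. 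Also, your check against $\hat G_0^\sim$ is unnecessary, since admissible transformations preserve the subclass $f_{uu}\neq0$.
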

\begin{remark}
 The group~$\hat G^{\sim}$ is also the extended generalized  equivalence group for the entire class~\eqref{eq_ggKawahara}, which is not normalized in contrast to its subclass with $f_{uu}\neq0.$ The effective generalized equivalence group~$\hat G^{\sim}$  for reparameterized class~\eqref{eq_ggKawahara} with the extended tuple of arbitrary elements $(f,\alpha,\beta,\gamma,A)$ is a nontrivial extended generalized equivalence group
for the initial class~\eqref{eq_ggKawahara}. Indeed, this group induces the maximal subgroupoid in the equivalence groupoid of the  class~\eqref{eq_ggKawahara} among all the equivalence groups of possible reparameterizations of this class. Moreover, the $x$-components of transformations from~$\hat G^{\sim}$ depend on the new arbitrary element~$A$.
\end{remark}

The equivalence groupoid of the subclass of the class~\eqref{eq_ggKawahara} singled out by the condition $f_{uu}\neq0$ (resp. of the class~\eqref{eq_Kawahara_f=u}) is generated
by its extended generalized equivalence group~$\hat G^{\sim}$ (resp. $\hat G_0^{\sim}$).
Therefore, the class~\eqref{eq_ggKawahara} is not normalized but it can be partitioned into two disjoint subclasses each of which is normalized. It can be shown that these groups are not only  effective generalized equivalence groups of the corresponding reparameterized classes but also their entire generalized equivalence groups (see the related definition in~\cite{Opanasenko2017}).

Using the results of Theorems~\ref{theorem_fuune0} and~\ref{theorem_fuu=0} we derive the criterion of reducibility of variable coefficient generalized Kawahara equations to their constant coefficient counterparts. The following statement is true.
\begin{theorem}
An equation from the class~\eqref{eq_ggKawahara} with variable coefficients $\alpha,$ $\beta$ and $\gamma$  is reducible to a constant coefficient equation from the
same class  if and only if the coefficients satisfy the conditions
\begin{gather*}\label{criterion1}
\left(\frac\beta\alpha\right)_t=0,\quad\left(\frac{\gamma\vphantom{\beta}}\alpha\right)_t=0, \quad\mbox{for}\quad f_{uu}\ne0,\\[1ex]
\label{criterion2}
\left(\frac1{\alpha}\left(\frac\beta\alpha\right)_t\right)_t=0,\quad
\left(\frac{\gamma\vphantom{\beta}\alpha^2}{\beta^3}\right)_t=0,\quad\mbox{for}\quad f_{uu}=0.
\end{gather*}
\end{theorem}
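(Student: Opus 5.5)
Since the class~\eqref{eq_ggKawahara} splits into the two normalized subclasses of Theorems~\ref{theorem_fuune0} and~\ref{theorem_fuu=0}, any point transformation mapping a variable-coefficient equation to a constant-coefficient one in the same class is induced by $\hat G^{\sim}$ (for $f_{uu}\neq0$) or by $\hat G^{\sim}_0$ (for $f_{uu}=0$). Both subclasses are invariant under these groups, because $f_{uu}\neq0$ is preserved by $\tilde f=\delta_0(f+\delta_2)$. So the reducibility question becomes: for which $\alpha,\beta,\gamma$ do there exist group parameters making $\tilde\alpha,\tilde\beta,\tilde\gamma$ constant? In each case the strategy is to find combinations of coefficients that are invariant under the equivalence group up to constant factors, and to show that their constancy is necessary and sufficient.

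For $f_{uu}\neq0$ the transformation formulas give
\[
\tilde\beta/\tilde\alpha=\delta_0\delta_1^{2}\,\beta/\alpha,\qquad \tilde\gamma/\tilde\alpha=\delta_0\delta_1^{4}\,\gamma/\alpha,
\]
because the factor $T_t$ cancels in each ratio. If the image has constant coefficients, these ratios are constant, so $\beta/\alpha$ and $\gamma/\alpha$ are constant; this proves necessity. For sufficiency, choose $T=A=\int\alpha\,dt$, so that $T_t=\alpha$. Then $\tilde\alpha=\delta_1/\delta_0$ is constant, $\tilde\beta=\delta_1^3\beta/\alpha$ is constant, and $\tilde\gamma=\delta_1^5\gamma/\alpha$ is constant.

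For $f_{uu}=0$ I use $\hat G^{\sim}_0$ and set $D=\delta_2A+\delta_1$. The formulas give
\[
\frac{\tilde\beta}{\tilde\alpha}=\frac{\varepsilon_2^3}{\Delta}\,\frac{\beta}{\alpha D},\qquad
\frac{\tilde\gamma\tilde\alpha^2}{\tilde\beta^3}=\frac{\Delta^2}{\varepsilon_2^4}\,\frac{\gamma\alpha^2}{\beta^3},
\]
since the powers of $T_t$ and of $D$ cancel in the second expression. The second identity gives the necessity of $(\gamma\alpha^2/\beta^3)_t=0$ immediately. For the first condition, constancy of $\tilde\beta/\tilde\alpha$ means $\beta/\alpha=c(\delta_2A+\delta_1)$, i.e.\ $\beta/\alpha$ is affine in $A$. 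Since $A_t=\alpha$, differentiating gives $(\beta/\alpha)_t=c\,\delta_2\alpha$, which is equivalent to $\big(\alpha^{-1}(\beta/\alpha)_t\big)_t=0$. The converse step runs the same computation backwards: if $\alpha^{-1}(\beta/\alpha)_t=k$ is constant, integrating gives $\beta/\alpha=kA+m$. Here $A$ is defined only up to an additive constant, which is absorbed into $m$.

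For sufficiency in this case, choose $\delta_2=k$ and $\delta_1=m$. This is possible because $\beta\neq0$ forces $D=kA+m\neq0$. Then choose $\delta_1',\delta_2'$ with $\Delta\neq0$, and take $T_t=\Delta\alpha/D^2$, i.e.\ $T=\tilde A$ up to a constant. This makes $\tilde\alpha=1$. Then $\tilde\beta=\varepsilon_2^3\beta/(T_tD^3)=\varepsilon_2^3\beta/(\Delta\alpha D)$ is constant, and $\tilde\gamma$ is constant by the invariance of $\gamma\alpha^2/\beta^3$.

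The main subtlety, more a bookkeeping point than a real obstacle, is to check that the nonlocal element $A$ and its additive constant are handled consistently. For $f_{uu}\neq0$ this means checking that $\varepsilon_0$ and $\delta_2$ play no role. In both cases it also means confirming that normalization really justifies restricting to equivalence transformations.
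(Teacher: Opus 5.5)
Your proposal is correct and follows the route the paper indicates when it says the criterion is derived from Theorems~\ref{theorem_fuune0} and~\ref{theorem_fuu=0}: normalization reduces reducibility to the action of $\hat G^{\sim}$, resp.\ $\hat G^{\sim}_0$, and your ratio identities for $\tilde\beta/\tilde\alpha$, $\tilde\gamma/\tilde\alpha$, resp.\ $\tilde\beta/\tilde\alpha$ and $\tilde\gamma\tilde\alpha^2/\tilde\beta^3$, give both necessity and sufficiency. One justification needs tightening: invariance of each subclass under its own group does not exclude an admissible transformation from one subclass to the other, but differentiating~\eqref{ddeteq3} twice in $u$ (with $\tilde f$ evaluated at $\tilde u=U^1u+U^0$) gives $\tilde\alpha T_t(U^1)^2\tilde f_{\tilde u\tilde u}=\alpha X^1 f_{uu}$, so every admissible transformation preserves the condition $f_{uu}=0$.
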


\section{Gauging of arbitrary elements}\label{sec:gauging}
The presence of the arbitrary function $T(t)$ in the equivalence
transformations from the group $\hat G^\sim$  of the entire class~\eqref{eq_ggKawahara} allows one to gauge
either $\alpha$ or $\beta$ or $\gamma$ to
a simple constant value, e.g., to 1.
An important question is which one of the three potential gaugings is the optimal one.
It was shown in~\cite{KPV2014,VKS2016} how to choose the optimal gauging using the normalization property of the class under consideration.
The classes  normalized in the usual sense are most convenient for investigation and the classes normalized in the extended generalized sense are the most complicated among normalized classes. If the class is not normalized, one should look for the possibility of a partition of such a class into normalized subclasses. In our case the class~\eqref{eq_ggKawahara} that is not normalized can be partitioned into two normalized subclasses singled out by the conditions
$f_{uu}\neq0$  and $f_{uu}=0$.  Gauging the arbitrary elements will be performed separately in each subclass.

{\bf I.} If $f_{uu}\neq0,$ then Theorem~\ref{theorem_fuune0} implies that the
subclasses of the  class~\eqref{eq_ggKawahara}  singled out by the conditions  $\beta=1$ or $\gamma=1$ will stay normalized only in the extended generalized sense, since equivalence transformations
for such subclasses will still involve $A$.
After gauging $\alpha=1$ in the subclass of~\eqref{eq_ggKawahara} with $f_{uu}\neq0$  we obtain the class normalized in the usual sense, as in this case up to gauging equivalence transformations we can set  $A=t$  and this nonlocality disappears.

The gauging $\alpha=1$ is realized by the family of point transformations from the equivalence group $\hat G^\sim$ with
$
\tilde t=\int^t_{t_0}\!\alpha(y)\,{\rm d}y,$ $\tilde x=x,$ $\tilde u=u,$
that maps equations from the
class~\eqref{eq_ggKawahara} with $f_{uu}\neq0$  to equations from the same class  with $\tilde
\alpha=1$, $\tilde\beta=\beta/\alpha$ and $\tilde\gamma=\gamma/\alpha$.
All results on symmetries, conservation laws, classical solutions and other related objects for equations~\eqref{eq_ggKawahara} with \mbox{$f_{uu}\neq0$} can be found
using the similar results derived for equations from its subclass
\begin{eqnarray}\label{eq:genKawahara}
u_t+f(u)u_x+\beta(t)u_{xxx}+\gamma(t)u_{xxxxx}=0,\quad f_{uu}\beta\gamma\neq0.
\end{eqnarray}
We derive the  equivalence groupoid of
the class~\eqref{eq:genKawahara} and formulate the following statement.
\begin{theorem}
The class~\eqref{eq:genKawahara}  is normalized in the usual sense. Its usual equivalence group~$G^{\sim}$ consists of the transformations
\begin{gather*}
\tilde t=\delta_1t+\delta_2,\quad \tilde x=\delta_3x+\delta_4 t +\delta_5,\quad
\tilde u=\delta_6 u+\delta_7, \\
\tilde f= \dfrac1{\delta_1} \left(\delta_3f+\delta_4\right),\quad
 \tilde\beta=\dfrac{\delta_3{}^3}{\delta_1}\beta,\quad  \tilde\gamma=\dfrac{\delta_3{}^5}{\delta_1}\gamma,
\end{gather*}
where  $\delta_j,$ $j=1,\dots,7,$  are arbitrary constants with
$\delta_1\delta_3\delta_6\not=0$.
\end{theorem}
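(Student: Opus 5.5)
The plan is to derive the equivalence groupoid of class~\eqref{eq:genKawahara} from the result for the wider subclass with $f_{uu}\neq0$, rather than solving the determining equations from scratch. By Theorem~\ref{theorem_fuune0}, every admissible transformation between two equations of the subclass $f_{uu}\neq0$ of~\eqref{eq_ggKawahara} is induced by a transformation from $\hat G^{\sim}$. Class~\eqref{eq:genKawahara} is the subclass singled out by the constraint $\alpha=1$. Therefore every admissible transformation between two of its equations is a transformation from $\hat G^{\sim}$ that maps the source equation, with $\alpha=1$, to a target equation with $\tilde\alpha=1$. So we only need to determine which elements of $\hat G^{\sim}$ preserve the constraint $\alpha=1$.

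\textbf{Step 1.} Impose $\alpha=\tilde\alpha=1$ in $\tilde\alpha=\delta_1\alpha/(\delta_0T_t)$. This gives $T_t=\delta_1/\delta_0$, so $T$ is affine: $T=\delta_1 t/\delta_0+{\rm const}$.

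\textbf{Step 2.} Handle the nonlocal element $A$. With $\alpha=1$ the auxiliary equation reads $A_t=1$, so $A=t+c$ for some constant $c$. The constant $c$ only shifts $x$, and this shift is absorbed into the constant $\delta_3$ in the $x$-component. Hence the $x$-component becomes $\tilde x=\delta_1x+\delta_1\delta_2 t+{\rm const}$, which no longer depends on any arbitrary element. This removal of the nonlocality is the main point of the argument, and I expect it to be the only delicate step. One must check that the residual freedom in $A$ and in $\varepsilon_0$ produces nothing beyond constant shifts. One must also check that the relation $\tilde A=\delta_1A/\delta_0+\varepsilon_0$ is compatible with $\tilde A_{\tilde t}=1$. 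This compatibility holds automatically because $T_t=\delta_1/\delta_0$.

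\textbf{Step 3.} Rename the constants. The components $\tilde u=\delta_4u+\delta_5$, $\tilde\beta=\delta_1^3\beta/T_t$ and $\tilde\gamma=\delta_1^5\gamma/T_t$ carry over directly. For the nonlinearity, $\tilde f=\delta_0(f+\delta_2)$ with $\delta_0=\delta_1/T_t$. Now rename: write $\delta_1$ for the slope of $T$, $\delta_3$ for the old $\delta_1$, and $\delta_4$ for $\delta_1\delta_2$. Then $\tilde f=(\delta_3 f+\delta_4)/\delta_1$, $\tilde\beta=\delta_3^3\beta/\delta_1$ and $\tilde\gamma=\delta_3^5\gamma/\delta_1$, which is exactly the form stated in the theorem.

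\textbf{Step 4.} Close the argument. The resulting transformations are point transformations, and their $(t,x,u)$-components do not involve arbitrary elements. Hence they form the usual equivalence group $G^\sim$ of class~\eqref{eq:genKawahara}. Since they exhaust all admissible transformations of this class by the reasoning above, the class is normalized in the usual sense. As a sanity check, one can verify directly by substitution that each such transformation maps~\eqref{eq:genKawahara} to an equation of the same form.
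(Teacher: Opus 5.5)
Your proposal is correct and follows essentially the paper's route. The paper justifies this theorem in Section~\ref{sec:gauging} by restricting $\hat G^{\sim}$ of the normalized subclass $f_{uu}\neq0$ to the gauge $\alpha=1$; there $A=t$ up to gauge transformations, so the nonlocality disappears, which is exactly your Steps~1--3, and your renaming of constants together with the direct substitution check reproduces the stated form of $G^{\sim}$.
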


{\bf II.} If $f_{uu}=0,$ then Theorem~\ref{theorem_fuu=0} implies that after the gauging $\alpha=1$ the respective subclass of the class~\eqref{eq_Kawahara_f=u} will be normalized in the generalized sense, since the dependence on the constant arbitrary element $\kappa_0$ will remain in the transformations.
So, the optimal gauging in this case is the simultaneous gauging of two arbitrary elements, $\alpha(t)=1$ and $\kappa_0=0.$
This gauging  is realized by the family of point transformations from the equivalence group $\hat G_0^\sim$ with
$
\tilde t=\int^t_{t_0}\!\alpha(y)\,{\rm d}y,$ $\tilde x=x,$ $\tilde u=u+\kappa_0,$
that maps equations from the
class~\eqref{eq_Kawahara_f=u}  to equations from the same class with $\tilde
\alpha=1$, $\tilde \kappa_0=0$, $\tilde\beta=\beta/\alpha$ and $\tilde\gamma=\gamma/\alpha$.
Without loss of generality we can restrict ourselves to the investigation of the following class
\begin{equation}\label{eq:genKawahara_u}
u_t+uu_x+\beta(t)u_{xxx}+\gamma(t)u_{xxxxx}=0,
\end{equation}
instead of its superclass~\eqref{eq_Kawahara_f=u}.

The study of the equivalence groupoid of the class~\eqref{eq:genKawahara_u} results in the following assertion.
\begin{theorem}
The class~\eqref{eq:genKawahara_u}   is normalized in the usual sense. Its usual equivalence group~$G^{\sim}_1$ consists of the transformations of the form
\begin{gather*}
\tilde t=\dfrac{\delta_4t+\delta_3}{\delta_2t+\delta_1},\quad  \tilde x=\dfrac{\varepsilon_2x+\varepsilon_1t+\varepsilon_0}{\delta_2t+\delta_1},
\quad  \tilde u=\dfrac{\varepsilon_2(\delta_2t+\delta_1)u-\varepsilon_2\delta_2x+\varepsilon_1\delta_1-\varepsilon_0\delta_2}{\Delta}, \\[0.5ex]
\tilde\beta=\dfrac{\varepsilon_2{}^3}{(\delta_2t+\delta_1)\Delta}\beta,\quad  \tilde\gamma=\dfrac{\varepsilon_2{}^5}{(\delta_2t+\delta_1)^3\Delta}\gamma,
\end{gather*}
where   $\varepsilon_i,$ $i=0,1,2,$  and $\delta_j,$ $j=1,\dots,4,$  are arbitrary constants defined up to a nonzero multiplier;
$\varepsilon_2\not=0$ and $\Delta=\delta_1\delta_4-\delta_2\delta_3\neq0.$
\end{theorem}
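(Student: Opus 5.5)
The plan is to derive the equivalence groupoid of the class~\eqref{eq:genKawahara_u} from that of its superclass~\eqref{eq_Kawahara_f=u}. Theorem~\ref{theorem_fuu=0} says the superclass is normalized in the extended generalized sense with group $\hat G_0^\sim$, so every admissible transformation between two equations of~\eqref{eq:genKawahara_u} is induced by an element of $\hat G_0^\sim$. We then only need to find which elements of $\hat G_0^\sim$ keep the gauges $\alpha=1$, $\kappa_0=0$ on both the source and the target equation. If these conditions pick out a family of point transformations whose components for $t,x,u$ do not depend on the remaining arbitrary elements $\beta,\gamma$, and which maps every equation of the subclass into the subclass, then the subclass is normalized in the usual sense and this family is $G_1^\sim$.

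First I impose $\alpha=\tilde\alpha=1$. Since $A_t=\alpha=1$, we have $A=t+c$ for a constant $c$, and the shift $c$ can be absorbed into the constants $\delta_1,\varepsilon_0,\dots$. So we may set $A=t$ and $\tilde A=\tilde t$. The formula $\tilde A=(\delta_2'A+\delta_1')/(\delta_2A+\delta_1)$ then gives $T(t)=(\delta_2't+\delta_1')/(\delta_2t+\delta_1)$, a linear fractional function, and $T_t=\Delta/(\delta_2t+\delta_1)^2$. The condition $\tilde\alpha=\Delta\alpha/(T_t(\delta_2A+\delta_1)^2)=1$ is then satisfied identically. This is the key consistency check: the gauge $\alpha=1$ fixes $T$ completely rather than constraining it further. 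Renaming $\delta_2'\to\delta_4$ and $\delta_1'\to\delta_3$ gives the $t$-component in the statement.

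Next I impose $\kappa_0=\tilde\kappa_0=0$. From the formula for $\tilde\kappa_0$ this gives $\varepsilon_3\varepsilon_2=\delta_1\varepsilon_1-\delta_2\varepsilon_0$. Substituting this $\varepsilon_3$ and $A=t$ into the $x$- and $u$-components of Theorem~\ref{theorem_fuu=0} yields $\tilde x=(\varepsilon_2x+\varepsilon_1t+\varepsilon_0)/(\delta_2t+\delta_1)$ and $\tilde u=(\varepsilon_2(\delta_2t+\delta_1)u-\varepsilon_2\delta_2x+\varepsilon_1\delta_1-\varepsilon_0\delta_2)/\Delta$. Substituting $T_t$ into the formulas for $\tilde\beta$ and $\tilde\gamma$ gives $\tilde\beta=\varepsilon_2^3\beta/((\delta_2t+\delta_1)\Delta)$ and $\tilde\gamma=\varepsilon_2^5\gamma/((\delta_2t+\delta_1)^3\Delta)$. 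The resulting transformations involve only $t,x,u$ and constants, and they map the whole class~\eqref{eq:genKawahara_u} into itself, so they form the usual equivalence group. Normalization in the usual sense then follows, because every admissible transformation of the subclass has already been shown to be of this form.

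The step I expect to need the most care is the bookkeeping of the constants. This includes justifying $A=t$ using the freedom in the additive constant $\varepsilon_0$ of $\tilde A$, which is hidden in the projective parameters. It also includes checking that the overall scaling ambiguity (``defined up to a nonzero multiplier'') is preserved consistently, and verifying the sign and normalization of the $u$-component after eliminating $\varepsilon_3$. As a final check, I would substitute the transformation directly into $\tilde u_{\tilde t}+\tilde u\tilde u_{\tilde x}+\tilde\beta\tilde u_{\tilde x\tilde x\tilde x}+\tilde\gamma\tilde u_{\tilde x\tilde x\tilde x\tilde x\tilde x}=0$ and confirm that the result is proportional to the original equation.
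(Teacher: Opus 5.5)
Your proposal is correct and takes essentially the paper's route. The paper gives no separate proof of this theorem, but its gauging discussion in Section~\ref{sec:gauging} obtains $G_1^\sim$ in the same way: it restricts the extended generalized equivalence group $\hat G_0^\sim$ of Theorem~\ref{theorem_fuu=0} to the gauge $\alpha=1$, $\kappa_0=0$, so normalization in the usual sense is inherited from the superclass~\eqref{eq_Kawahara_f=u}. Your bookkeeping also holds up: absorbing $c$ in $A=t+c$ leaves $\Delta$ unchanged, $\tilde\alpha=1$ forces a linear fractional $T$, and eliminating $\varepsilon_3$ via $\tilde\kappa_0=0$ reproduces exactly the stated formulas for $\tilde u$, $\tilde\beta$ and $\tilde\gamma$.
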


In the next section  we demonstrate that  the chosen gaugings allow us to solve exhaustively the group classification problems for both derived normalized subclasses of the class~\eqref{eq_ggKawahara}.

\section{Group classification}
 As in the previous section we consider separately two normalized subclasses of the class~\eqref{eq_ggKawahara} that are singled out by the conditions
 $f_{uu}\neq0$ and $f_{uu}=0.$

{\bf I.}
The group classification problem for the class~\eqref{eq_ggKawahara} with $f_{uu}\neq0$ up to $\hat G^{\sim}$-equivalence reduces to
the similar problem for the class~\eqref{eq:genKawahara} up to $G^{\sim}$-equivalence.
The group classification of the class~\eqref{eq:genKawahara} is performed
using the  classical algorithm based on direct integration of the determining equations implied by the infinitesimal invariance
criterion~\cite{Olver1986,Ovsiannikov1982}.
The symmetry generators of the form $Q=\tau(t,x,u)\partial_t+\xi(t,x,u)\partial_x+\eta(t,x,u)\partial_u$
must satisfy the criterion of infinitesimal invariance,
\begin{equation*}\label{c2}
Q^{(5)}\{u_t+f(u)u_x+\beta(t)u_{xxx}+\gamma(t)u_{xxxxx}\}\Big|_{u_t=-f(u)u_x-\beta(t)u_{xxx}-\gamma(t)u_{xxxxx}}=0,
\end{equation*} where  $Q^{(5)}$ is the fifth prolongation of the  vector field~$Q$~\cite{Olver1986,Ovsiannikov1982}.

The infinitesimal invariance criterion implies the determining equations, the simplest of which result in the following forms of $\tau,$ $\xi,$ and $\eta,$
\begin{equation*}
\tau=\tau(t),\quad
\xi=\xi^1(t)x+\xi^0(t), \quad
\eta=\left(2\xi^1(t)+\mu(t)\right)u+\eta^0(t,x),
\end{equation*}
where $\tau$, $\xi^1$, $\xi^0$, $\mu$ and $\eta^0$ are arbitrary smooth functions of their variables.

Then the rest of the determining equations have the form
\begin{gather}\label{deteq1}
\eta^0_xf+(2\xi^1_t+\mu_t)u+\eta^0_t+\eta^0_{xxx}\beta+\eta^0_{xxxxx}\gamma=0,\\\label{deteq2}
\left((2\xi^1+\mu)u+\eta^0\right)f_u=(\xi^1-\tau_t)f+\xi^1_tx+\xi^0_t,\\\label{deteq3}
\tau \gamma_t =(5\xi^1-\tau_t)\gamma,\quad \tau \beta_t =(3\xi^1-\tau_t)\beta.
\end{gather}

Below we give the sketch of the proof.
To find the kernel $A^\cap$  of the maximal Lie invariance algebras $\mathcal A^{\max}$ of equations  \eqref{eq:genKawahara} we split the determining equations with respect to the arbitrary elements and their derivatives, which
results in $\tau=\eta=0,$ $\xi=c_1,$ where $c_1$ is an arbitrary constant. Therefore,  $A^\cap$ is the one-dimensional algebra
$\langle\partial_x\rangle$ (Case 0 of Table~\ref{Kawahara_tab:1}).
At the next step we assume that $f$ is arbitrary and look for specifications of $\beta$ and $\gamma$ for which equations~\eqref{eq:genKawahara}
possess Lie symmetry extensions.
These are the cases $(\beta,\gamma)=(\lambda,\delta)$ and  $(\beta,\gamma)=(\lambda(pt+q)^2,\delta(pt+q)^4)$, where $\lambda,$ $\delta,$ $p$ and $q$
are constants and $\lambda\delta p\neq0.$ Using equivalence transformations from the group $G^\sim$ we  can set $\delta=\pm1,$
$p=1$ and $q=0.$ The respective
bases of the maximal Lie invariance algebras are adduced in Cases 1 and 2 of Table~\ref{Kawahara_tab:1}.

As $f_{uu}\neq0$, equation \eqref{deteq1} implies  $\eta^0_x=\eta^0_t=0$, so, $\eta^0=c_0$ and $\mu=-2\xi^1+c_1,$ where $c_0$ and $c_1$ are arbitrary constants.
Then equation~\eqref{deteq2} leads
to the condition
 $\xi^1_t=0$ and reduces to
\begin{equation}\label{deteq2a}
(c_1u+c_0)f_u=(\xi^1-\tau_t)f+\xi^0_t.
\end{equation}
The equations from the class~\eqref{eq:genKawahara}  possess  Lie symmetry extensions, when $f$ takes one of the following forms:
$f=\nu(p u+q)^n+r,$ $n\neq0,1,$ $f=\nu e^{p u}+r,$ and $f=\nu\ln(pu+q)+r,$ where $\nu$ and $p$ are nonzero constants, $q$ and $r$ are arbitrary constants.
Up to the $G^\sim$-equivalence this list is exhausted by the cases 1. $f=u^n,$ $n\neq0,1$, 2. $f=e^u$ and 3. $f=\ln u.$
Throughout the paper, all cases involving logarithms or noninteger powers of~$u$ are considered locally on domains where $u>0$; equivalently, $\ln u$ may be replaced by $\ln|u|$ on any fixed-sign domain. Similarly, expressions containing arbitrary real powers of~$t$ are understood locally on time intervals where they are real, smooth and nonvanishing; after a time translation, such an interval may be taken to lie in $t>0$.
Each of these forms of $f$ should be substituted into equation~\eqref{deteq2a}, then the final forms of the coefficients $\tau,$ $\xi,$ and $\eta$ are found and the classifying equations~\eqref{deteq3} give the possible forms of $\beta$ and $\gamma$ for which equations~\eqref{eq:genKawahara} possess Lie symmetry extensions.
The detailed proof for the case $f=u^n$, $n\neq0,1$, was given in~\cite{KPV2014}. The consideration of the other cases is analogous. For brevity we omit the detailed proof for these cases and formulate the final result in
the following statement.
\begin{theorem}
The kernel of the maximal Lie invariance algebras of equations from the class~\eqref{eq_ggKawahara}   with $f_{uu}\neq0$
coincides with the one-dimensional algebra $\langle\partial_x\rangle$.
All possible $\hat G^\sim$-inequivalent cases of extension of the maximal Lie invariance algebras
are exhausted by Cases 1--9 of Table~\ref{Kawahara_tab:1}.
\end{theorem}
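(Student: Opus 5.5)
The plan is to reduce the statement, via the normalization established in Theorem~\ref{theorem_fuune0} and the gauging $\alpha=1$, to the group classification of the class~\eqref{eq:genKawahara} up to $G^\sim$-equivalence. Since that class is normalized in the usual sense and $\hat G^\sim$ induces $G^\sim$ on it, $G^\sim$-inequivalent cases for~\eqref{eq:genKawahara} are exactly $\hat G^\sim$-inequivalent cases for the subclass $f_{uu}\neq0$ of~\eqref{eq_ggKawahara}. I would then work with the determining equations \eqref{deteq1}--\eqref{deteq3} in the reduced form: $\eta^0=c_0$, $\mu=-2\xi^1+c_1$, $\xi^1_t=0$, together with~\eqref{deteq2a}.

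\textbf{Kernel.} Require the determining equations to hold for all admissible $(f,\beta,\gamma)$. Splitting \eqref{deteq2a} with respect to $f$ and $f_u$ gives $c_1=c_0=0$, $\xi^1=\tau_t$ and $\xi^0_t=0$. Splitting \eqref{deteq3} with respect to $\beta_t,\gamma_t$ gives $\tau=0$, hence $\xi^1=0$. Only $\partial_x$ remains, which yields the kernel.

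\textbf{Extensions.} \eqref{deteq2a} is an ODE in $f$ whose coefficients $c_1,c_0$ and $\xi^1-\tau_t$, $\xi^0_t$ are constants or functions of $t$ only. Because $f_{uu}\neq0$, differentiating in $t$ shows that $\tau_{tt}=\xi^0_{tt}=0$ unless $f$ satisfies a linear relation that would force $f_{uu}=0$. So $\tau=a_1t+a_0$ and $\xi^0=b_1t+b_0$. The span of the triples $(c_1,c_0,\xi^1-\tau_t,\xi^0_t)$ admitted by a given $f$ has dimension $k$. If $k=0$, $f$ is arbitrary and \eqref{deteq3} alone gives the extensions $(\beta,\gamma)=(\lambda,\delta)$ and $(\lambda t^2,\pm t^4)$ (Cases~0--2). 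If $k\geq1$, integrate $(c_1u+c_0)f_u=kf+m$. The cases $c_1\neq0$ with $k\neq0$, $c_1\neq0$ with $k=0$, and $c_1=0$ with $k\neq0$ give $f$ as a power, a logarithm, or an exponential. Using $G^\sim$ (translations and scalings of $u$, and the shift $\tilde f=(\delta_3f+\delta_4)/\delta_1$), these normalize to $u^n$, $\ln u$ and $e^u$. Two independent relations would force $f_{uu}=0$, so $k\le1$ for nonlinear $f$.

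\textbf{Classifying $\beta,\gamma$ for each $f$.} This is the main obstacle. For each $f$, I would express $\xi^1$ and $\xi^0_t$ through $\tau$ and $c_1$: for $u^n$, $\xi^1=\tau_t+nc_1$ and $\xi^0=\mathrm{const}$; for $e^u$, $\xi^1=\tau_t+c_0$; for $\ln u$, $\xi^1=\tau_t$ and $\xi^0_t=c_1$. Substituting into \eqref{deteq3} gives a pair of linear first-order ODEs for $\beta,\gamma$ with $\tau=a_1t+a_0$. Dividing them yields $\tau(\gamma/\beta)_t=2\xi^1(\gamma/\beta)$. Integrating in the subcases $a_1\neq0$ (translate $t$ so that $\tau=t$) and $a_1=0$ (scale so that $\tau=1$) gives power-type coefficients $\beta=\lambda t^{\rho}$, $\gamma=\pm t^{\sigma}$ with constrained exponents, or exponential-type coefficients $\beta=\lambda e^{\rho t}$, $\gamma=\pm e^{\sigma t}$. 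The remaining constants are normalized by the residual $G^\sim$ transformations (scalings of $t,x$). For each resulting pair one must check that no further symmetry appears, which gives the maximal algebras, and that distinct entries are $G^\sim$-inequivalent. The delicate part is careful bookkeeping of degenerate exponent values: these coincide with Cases~1--2 or produce the special dimension-increasing cases. They can be handled by computing, for each candidate, the dimension of the solution space of \eqref{deteq3}. Matching everything against Table~\ref{Kawahara_tab:1} completes the proof, following the pattern done for $u^n$ in~\cite{KPV2014}.
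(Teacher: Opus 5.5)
Your proposal follows the same route as the paper: you gauge $\alpha=1$ to reach the usually normalized class~\eqref{eq:genKawahara}, obtain the kernel by splitting the determining equations, extract the forms $u^n$, $e^u$, $\ln u$ from~\eqref{deteq2a}, and then integrate~\eqref{deteq3} case by case, and the paper likewise defers the detailed bookkeeping to~\cite{KPV2014}. The one slip is that you split $\tau=a_1t+a_0$ only into $a_1\neq0$ and $a_1=0$ with $\tau=1$, which omits $\tau\equiv0$. For $u^n$ and $e^u$ that subcase gives only the kernel, but for $f=\ln u$ it is exactly where Case~3 comes from: $c_1$ does not enter~\eqref{deteq3} there, so $t\partial_x+u\partial_u$ is admitted for arbitrary $\beta,\gamma$.
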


\begin{table}[p!] \renewcommand{\arraystretch}{1.4}
\begin{center}
\setcounter{tbn}{-1}
\refstepcounter{table}\label{Kawahara_tab:1}
\textbf{Table~\thetable.}
The group classification of the class\\[1ex] $u_t+\alpha(t) f(u)u_x+\beta(t) u_{xxx}+\gamma(t) u_{xxxxx}=0$,\quad $f_{uu}\alpha\beta\gamma\neq0$.
\\[2ex]
\begin{tabular}{|c|c|c|c|l|c|}
\hline
no.&$\ f(u)\ $&$\ \beta(t)\ $&$\ \gamma(t)\ $&\hfil Basis of $\mathcal A^{\max}$&\hfil Type of $\mathcal A^{\max}$ \\
\hline
0&$\forall$&$\forall$&$\forall$&$\ \partial_x$&$A_1$\\
\hline
1&$\forall$&$\lambda t^2$&$\delta t^4$&
$\ \partial_x,\,t\partial_t+x\partial_x$&$A_2$\\
\hline
2&$\forall$&$\lambda $&$\delta$&
$\ \partial_x,\,\partial_t$&$2A_1$\\
\hline
3&$\ln u$&$\forall$&$\forall$&$\ \partial_x, t\partial_x+u\partial_u$&$2A_1$\\
\hline
4&$\ln u$&$\lambda t^2$&$\delta t^4$&$\ \partial_x, t\partial_x+u\partial_u, t\partial_t+x\partial_x$&$A_1\oplus A_2$\\
\hline
5&$\ln u$&$\lambda $&$\delta$&
$\ \partial_x,\, t\partial_x+u\partial_u,\,\partial_t$&$A_{3.1}$\\
\hline
6&$u^n$&
$\lambda t^\rho$&$\delta t^{\frac{5\rho+2}{3}}$&$\ \partial_x,\,3nt\partial_t+(\rho+1)n x\partial_x+(\rho-2) u\partial_u$&$\begin{array}{l}A_2,\ \rho\neq-1\\2A_1,\ \rho=-1\end{array}$\\
\hline
7&$u^n$&$\lambda e^{t}$&$\delta e^{\frac53 t}$&
$\ \partial_x,\,3n\partial_t+nx\partial_x+u\partial_u$&$A_2$\\
\hline
8&$e^u$&
$\lambda t^\rho$&$\ \delta t^{\frac{5\rho+2}{3}}\ $&$\ \partial_x,\,3t\partial_t+(\rho+1) x\partial_x+(\rho-2) \partial_u$&$\begin{array}{l}A_2,\ \rho\neq-1\\2A_1,\ \rho=-1\end{array}$\\
\hline
9&$e^u$&$\lambda e^{t}$&$\delta e^{\frac53 t}$&
$\ \partial_x,\,3\partial_t+x\partial_x+\partial_u$&$A_2$\\
\hline
\end{tabular}\\[2ex]
\parbox{145mm}{Here $\alpha(t)=1\bmod\, \hat G^\sim$, $\rho$ and $n$ are arbitrary constants, $n\neq0,1$; $\delta$ and $\lambda$ are nonzero constants, $\delta=\pm1\bmod\, \hat G^\sim$. }
\end{center}
\vspace{1ex}
\begin{center}
\setcounter{tbn}{-1}
\refstepcounter{table}\label{Kawahara_tab:2}
\textbf{Table~\thetable.}
The group classification of the class\\[1ex] $u_t+\alpha(t) (u+\kappa_0)u_x+\beta(t) u_{xxx}+\gamma(t) u_{xxxxx}=0$,\quad $\alpha\beta\gamma\neq0$.
\\[2ex]
\setlength{\tabcolsep}{2pt}
\begin{tabular}{|c|c|c|l|c|}
\hline
no.&$\beta(t)$&$\gamma(t)$&\hfil Basis of $\mathcal A^{\max}$&\hfil Type of $\mathcal A^{\max}$ \\
\hline
0&$\forall $&$\forall$&
$\ \partial_x,\,t\partial_x+\partial_u$&$2A_1$\\
\hline
1&
$\lambda t^\rho$&$\delta t^{\frac{5\rho+2}{3}}$&$\begin{array}{l} \partial_x,\,t\partial_x+\partial_u,\\3t\partial_t+(\rho+1) x\partial_x+(\rho-2) u\partial_u\end{array}$&
$\begin{array}{l}
A_1\oplus A_2,\ \rho\in\{-1,2\}\\[0.5ex]
A^a_{3.5},\ a=\tfrac{\rho+1}{\rho-2},\ \rho<\tfrac12, \rho\neq-1\\[0.5ex]
A^a_{3.5},\ a=\tfrac{\rho-2}{\rho+1},\ \rho>\tfrac12, \rho\neq2\\[0.5ex]
A_{3.4},\ \rho=\tfrac12
\end{array}$\\
\hline
2&$\lambda e^{t}$&$\delta e^{\frac53 t}$&
$\ \partial_x,\,t\partial_x+\partial_u,\,3\partial_t+x\partial_x+u\partial_u$&$A_{3.2}$\\
\hline
3&$\lambda $&$\delta$&
$\ \partial_x,\,t\partial_x+\partial_u,\,\partial_t$&$A_{3.1}$\\
\hline
4&\ \rule[-3.5ex]{0ex}{8ex}$\begin{array}{@{}c@{}}\lambda{(t^2\!+\!1)^\frac12}\times\\ e^{3\nu\arctan t}\end{array}$&$\ \begin{array}{@{}c@{}}\delta {(t^2\!+\!1)^\frac32}\times\\ e^{5\nu\arctan t}\end{array}\ $&
\ \parbox{60mm}{$\partial_x,\,t\partial_x+\partial_u,$\\[1ex] $(t^2+1)\partial_t+(t+\nu)x\partial_x+(x-(t-\nu)u)\partial_u$}&$ \begin{array}{@{}c@{}}A^{a}_{3.7},\ a=|\nu|, \nu\neq0\\A_{3.6},\  \nu=0\end{array}$\\
\hline
\end{tabular}
\\[2ex]
\parbox{150mm}{Here $\alpha=1\bmod\, \hat G_0^\sim$, $\kappa_0=0\bmod\, \hat G_0^\sim$, and $\rho$, $\nu$ are arbitrary constants. Up to $G_1^\sim$-equivalence one may impose $\rho\geqslant1/2$ and $\nu\geqslant0$. The constants $\delta$ and $\lambda$ are nonzero, with $\delta=\pm1\bmod\,G_1^\sim$.}
\end{center}
\end{table}
{\bf II.}
In the previous section we have shown that the group classification problem for the class~\eqref{eq_Kawahara_f=u} up to $\hat G_0^{\sim}$-equivalence reduces to
such a problem for the class~\eqref{eq:genKawahara_u} up to $G_1^{\sim}$-equivalence. The group classification of the  class~\eqref{eq:genKawahara_u} up to $G_1^{\sim}$-equivalence was carried out exhaustively in~\cite{KPV2014}. So we use the results derived therein and formulate the following statement.

 \begin{theorem}
The kernel of the maximal Lie invariance algebras of equations from the class~\eqref{eq_Kawahara_f=u}
coincides with the two-dimensional algebra $\langle\partial_x,\,T(t)\partial_x+\partial_u\rangle$, where $T_t=\alpha$.
Up to $\hat G_0^\sim$-equivalence, all possible cases of extension of the maximal Lie invariance algebras are represented by Cases~1--4 of Table~\ref{Kawahara_tab:2}; in the gauge $\alpha=1$  the kernel generator $T(t)\partial_x+\partial_u$ becomes $t\partial_x+\partial_u$.
\end{theorem}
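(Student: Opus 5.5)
The plan is to reduce the statement to the group classification of the gauged class~\eqref{eq:genKawahara_u} up to $G_1^\sim$-equivalence, and then transport the results back along the family of transformations from $\hat G_0^\sim$. First, the gauging transformation $\tilde t=\int^t\alpha\,{\rm d}y=A(t)$, $\tilde x=x$, $\tilde u=u+\kappa_0$ from $\hat G_0^\sim$ maps an equation~\eqref{eq_Kawahara_f=u} to an equation of class~\eqref{eq:genKawahara_u} with $\tilde\beta=\beta/\alpha$ and $\tilde\gamma=\gamma/\alpha$. By Theorem~\ref{theorem_fuu=0}, class~\eqref{eq_Kawahara_f=u} is normalized in the extended generalized sense, and the class~\eqref{eq:genKawahara_u} is normalized in the usual sense with equivalence group $G_1^\sim$. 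Any two $\hat G_0^\sim$-equivalent equations therefore have gauged images that are $G_1^\sim$-equivalent, and conversely, because $G_1^\sim$ is exactly the subgroup of $\hat G_0^\sim$ preserving the gauge $\alpha=1$, $\kappa_0=0$, $A=t$. So the classification up to $\hat G_0^\sim$ is the same as the classification of~\eqref{eq:genKawahara_u} up to $G_1^\sim$.

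Next, for class~\eqref{eq:genKawahara_u} I would derive the determining equations as in Section~4, part~I. They give $\tau=\tau(t)$, $\xi=\xi^1(t)x+\xi^0(t)$ and $\eta=(2\xi^1+\mu)u+\eta^0(t,x)$. Because $f=u$, the equation corresponding to~\eqref{deteq2} splits with respect to $u$ into $2\xi^1+\mu=\xi^1-\tau_t$ and $\eta^0=\xi^1_tx+\xi^0_t$. Then~\eqref{deteq1} yields $\xi^1_{tt}=0$, $\xi^0_{tt}=0$ and $\tau_{tt}$ constrained, which is consistent with the projective action of $G_1^\sim$ in $t$: $\tau$ is quadratic in $t$ and $\xi^1=\tfrac12\tau_t+c$. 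Splitting the remaining classifying equations~\eqref{deteq3} with respect to $\beta,\gamma$ gives the kernel $\langle\partial_x,t\partial_x+\partial_u\rangle$. Extensions occur exactly when $\beta,\gamma$ satisfy $\tau\beta_t=(3\xi^1-\tau_t)\beta$ and $\tau\gamma_t=(5\xi^1-\tau_t)\gamma$ for some nonzero $\tau=c_2t^2+c_1t+c_0$.

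For the extensions, I would classify the quadratic $\tau$ up to the $SL(2)$-type action of $G_1^\sim$ on $t$ together with scalings. The cases are: a double root or linear $\tau$ with a simple root, giving $\tau=t$ (powers); constant $\tau$ after moving the root to infinity, giving $\tau=1$ (constants or exponentials depending on $c$); and complex roots, giving $\tau=t^2+1$ (the arctan case). Integrating the two linear ODEs in each case produces Cases~1--4 of Table~\ref{Kawahara_tab:2}; the relation $\gamma\propto t^{(5\rho+2)/3}$ comes from eliminating $c$. This part was carried out exhaustively in~\cite{KPV2014}, so I would cite it rather than redo it.

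Finally, I would pull the kernel back through the gauging map. The generator $t\partial_x+\partial_u$ in gauged variables becomes $A(t)\partial_x+\partial_u$ with $A_t=\alpha$, that is, $T(t)\partial_x+\partial_u$ in the notation of the statement. The main obstacle I expect is justifying that this equivalence is exact, i.e.\ that no additional conjugacy arises in $\hat G_0^\sim$ beyond $G_1^\sim$ after gauging. I would check this directly from the formulas of Theorem~\ref{theorem_fuu=0}: imposing $\tilde\alpha=\alpha=1$ and $\tilde\kappa_0=\kappa_0=0$ forces $T$ to be the linear-fractional function of $A=t$ that appears in $G_1^\sim$.
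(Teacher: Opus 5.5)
Your proposal follows the paper's own route: gauge $\alpha=1$, $\kappa_0=0$ by the element $\tilde t=\int\alpha\,{\rm d}t$, $\tilde x=x$, $\tilde u=u+\kappa_0$ of $\hat G_0^\sim$. This reduces the problem to the class~\eqref{eq:genKawahara_u}, which is normalized in the usual sense with group $G_1^\sim$; you then quote the classification of~\cite{KPV2014} and pull the kernel generator back to $A(t)\partial_x+\partial_u$. Your determining-equation sketch ($\mu=-\xi^1-\tau_t$, $\tau_{tt}=2\xi^1_t$, $\xi^1_{tt}=\xi^0_{tt}=0$) is also correct.

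One slip in your orbit sketch needs fixing. A double root of $\tau$ is the parabolic case and is equivalent to $\tau=1$: for example, $\tilde t=-1/t$ sends $t^2\partial_t$ to $\partial_{\tilde t}$ and gives the exponential and constant Cases~2--3. By contrast, $\tau=t$ represents two distinct real roots, with a nonconstant linear $\tau$ counting as having one root at infinity. Your list of representatives $t$, $1$, $t^2+1$ is still the right one.
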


 \begin{table}[p!]\small \renewcommand{\arraystretch}{1.55}
\begin{center}
 \setcounter{tbn}{-1}
\refstepcounter{table}\label{Kawahara_tab:3}
\textbf{Table~\thetable.}
The group classification of the class~$u_t+\alpha f(u)u_x+\beta u_{xxx}+\gamma u_{xxxxx}=0$,\quad $f_{uu}\alpha\beta\gamma\neq0$, \\without simplification by equivalence transformations.
\\[2ex]
\begin{tabular}{|c|c|c|c|l|}
\hline
no.&$f(u)$&$\beta(t)$&$\gamma(t)$&\hfil Basis of $\mathcal A^{\max}$ \\
\hline
0&$\forall$&$\forall$&$\forall$&$\partial_x$\\
\hline
1&$\forall$&$\lambda_1\alpha(T\!+\!s)^2 $&$\lambda_2\alpha(T\!+\!s)^4$&
$\partial_x,\,(T\!+\!s)\alpha^{-1}\partial_{t}+x\partial_x$\\
\hline
2&$\forall$&$\lambda_1\alpha $&$\lambda_2\alpha$&
$\partial_x,\,\alpha^{-1}\partial_{t}$\\
\hline
3&$\ln(pu\!+\!q)$&$\forall$&$\forall$&
$\partial_x,\,T\partial_x+\left(u+\tfrac qp\right)\!\partial_u$\\
\hline
4&$\ln(pu\!+\!q)$&$\lambda_1\alpha(T\!+\!s)^2$&$\lambda_2\alpha(T\!+\!s)^4$&
$\partial_x,\,T\partial_x+\left(u+\tfrac qp\right)\!\partial_u,\,(T\!+\!s)\alpha^{-1}\partial_{t}+x\partial_x$\\
\hline
5&$\ln(pu\!+\!q)$&$\lambda_1\alpha$&$\lambda_2\alpha$&
$\partial_x,\,T\partial_x+\left(u+\tfrac qp\right)\!\partial_u,\,\alpha^{-1}\partial_{t}$\\
\hline
6&$(pu\!+\!q)^n\!+\!r$&
$\lambda_1\alpha(T\!+\!s)^\rho$&$\lambda_2\alpha(T\!+\!s)^{\frac{5\rho+2}{3}}$&
$\partial_x,\,3n(T\!+\!s)\alpha^{-1}\partial_{t}+
n\!\left((\rho\!+\!1)x-(\rho\!-\!2)r(T\!+\!s)\right)\!\partial_{x}+$\\
&&&&$(\rho\!-\!2)\!\left(u+\tfrac qp\right)\!\partial_{u}$\\
\hline
7&$(pu\!+\!q)^n\!+\!r$&$\lambda_1\alpha e^{mT}$&$\lambda_2\alpha e^{\frac53 mT}$&
$\partial_x,\, 3n\alpha^{-1}\partial_{t}+nm(x-rT)\partial_{x}+
  m\!\left(u+\tfrac qp\right)\!\partial_{u}$\\
\hline
8&$e^{pu}\!+\!r$&
$\lambda_1\alpha(T\!+\!s)^\rho$&$\ \lambda_2\alpha(T\!+\!s)^{\frac{5\rho+2}{3}}\ $&
$\partial_x,\,$\\&&&&$3(T\!+\!s)\alpha^{-1}\partial_{t}+
\left((\rho\!+\!1)x-(\rho\!-\!2)r(T\!+\!s)\right)\!\partial_{x}+\tfrac{\rho-2}{p}\,\partial_{u}$\\
\hline
9&$e^{pu}\!+\!r$&$\lambda_1\alpha e^{mT}$&$\lambda_2\alpha e^{\frac53 mT}$&
$\partial_x,\,3\alpha^{-1}\partial_{t}+m(x-rT)\partial_{x}+\tfrac mp\,\partial_{u}$\\
\hline
\end{tabular}
\\[1.5ex]
\parbox{155mm}{Here $\lambda_1$, $\lambda_2$, $p$, $q$, $r$, $s$, $m$, $n$ and $\rho$ are arbitrary constants, $\lambda_1\lambda_2pm\ne0$, $n\neq0,1$; $\alpha$ is an arbitrary nonvanishing smooth function of $t$, $T=\int\!\alpha(t)\,{\rm d}t$.  }
\end{center}
\vspace{1ex}
\begin{center}
\setcounter{tbn}{-1}
\refstepcounter{table}\label{Kawahara_tab:4}
\textbf{Table~\thetable.}
The group classification of the class~$u_t+\alpha (u+\kappa_0)u_x+\beta u_{xxx}+\gamma u_{xxxxx}=0$,\quad $\alpha\beta\gamma\neq0$, \\without simplification by equivalence transformations.
\\[2ex]
\begin{tabular}{|@{\,}c@{\,}|@{\,}c@{\,}|@{\,}c@{\,}|@{\,\,}l@{\,}|}
\hline
no.&$\beta(t)$&$\gamma(t)$&\hfil Basis of $\mathcal A^{\max}$ \\
\hline
$0$&$\forall $&$\forall$&
$\partial_x,\,T\partial_x+\partial_u$\\
\hline
$1$&
$\lambda_1\alpha(aT\!+\!b)^\rho(cT\!+\!d)^{1-\rho}$&$\lambda_2\alpha(aT\!+\!b)^{\frac{5\rho+2}{3}}(cT\!+\!d)^{\frac{7-5\rho}{3}}$&
$\partial_x,\,T\partial_x+\partial_u, 3(aT\!+\!b)(cT\!+\!d)\alpha^{-1}\partial_{t}+
  $\\
  &&&$\left(3acT\!+\!ad(\rho\!+\!1)+bc(2\!-\!\rho)\!\right)x\partial_{x}+$\\
   &&&$\left(3acx\!-\!(3acT\!+\!ad(2\!-\!\rho)\!+\!bc(\rho\!+\!1)) (u+\kappa_0)\!\right)\partial_{u}$\\
\hline
$2$&$\lambda_1\alpha(cT\!+\!d)\exp\!\left(\frac{aT+b}{cT+d}\right)$&$\lambda_2\alpha(cT\!+\!d)^3\exp\!\left(\frac53 \frac{aT+b}{cT+d}\right)$&
$\partial_x,\,T\partial_x+\partial_u,\,3(cT\!+\!d)^2\alpha^{-1}\partial_{t}+$\\
 &&&$\left(3c(cT\!+\!d)\!+\!\Delta\right)x\partial_{x}+$\\
  &&&$
 \left(3c^2x\!+\! (\Delta\!-\!3c(cT\!+\!d))(u+\kappa_0)\right)\partial_{u}$\\
\hline
$3$&$\lambda_1\alpha(cT\!+\!d)$&$\lambda_2\alpha(cT\!+\!d)^3$&
$\partial_x,\,T\partial_x+\partial_u,\,(cT\!+\!d)^2\alpha^{-1}\partial_{t}+$\\
  &&&$c(cT\!+\!d)x\partial_{x}+
  c(c x\!-\!(cT\!+\!d) (u+\kappa_0))\partial_{ u}$\\
\hline
&$\lambda_1\alpha\exp\!\left(3\nu\arctan\frac{aT+b}{cT+d}\right)\!\!\times$&$\lambda_2\alpha\exp\!\left(5\nu\arctan\frac{aT+b}{cT+d}\right)\!\!\times$&
$\partial_x,\,T\partial_x+\partial_u,\,\left(\!(aT\!+\!b)^2\!+\!(cT\!+\!d)^2\!\right)\!\alpha^{-1}\partial_{t}+
$\\
$4$&$
\left((aT\!+\!b)^2\!+\!(cT\!+\!d)^2\right)^{\frac12}$&$\left(\!(aT\!+\!b)^2\!+\!(cT\!+\!d)^2\right)^{\frac32}$&
$  \left(a(aT\!+\!b)\!+\!c(cT\!+\!d)\!+\!\Delta\nu\right)x\partial_{x}+
  $\\
  &&&$\left(\!(a^2\!+\!c^2) x\!-\!(a(aT\!+\!b)\!+\!c(cT\!+\!d)\!-\!\Delta\nu) (u+\kappa_0)\!\right)\!\partial_{u}$\\
\hline
\end{tabular}
\\[1.5ex]
\parbox{155mm}{Here  $\lambda_1$, $\lambda_2$, $a$, $b$, $c$, $d$, $\rho$ and $\nu$ are arbitrary constants, $\lambda_1\lambda_2(c^2+d^2)\ne0$, and $\Delta=ad-bc\neq0$ in Cases 1, 2 and 4; $\alpha$ is an arbitrary nonvanishing smooth function of $t$, $T=\int\! \alpha(t) {\rm d}t$. The coefficient $\kappa_1$ of $u$ in $f=\kappa_1u+\kappa_0$ is set to 1 without loss of generality, since it can be absorbed into $\alpha$.}
\end{center}
\end{table}

The presented group classification reveals equations of the form~\eqref{eq_ggKawahara} that are of more potential interest for applications and  for which the
classical  Lie reduction method can be  used. The complete result was achieved using the knowledge of transformation properties of the class namely due to partition of the class into two normalized subclasses and choosing the optimal gauging for each of these subclasses.

To derive the complete list of Lie symmetry extensions for the entire class~\eqref{eq_ggKawahara},
where arbitrary elements are not simplified by point transformations, we use the equivalence-based approach~\cite{vane2012a}.
The results are collected in Tables~\ref{Kawahara_tab:3} and~\ref{Kawahara_tab:4}.

\section{Lie reductions and exact solutions}\label{sec:reductions}

The reduction method
with respect to subalgebras of Lie invariance algebras is algorithmic  and
well-known~\cite{Olver1986}.
As equations~\eqref{eq:genKawahara} are (1+1)-dimensional nonlinear partial differential equations, their Lie reductions with respect to one-dimensional subalgebras of the maximal Lie invariance algebras lead to ordinary differential equations (ODEs).
In order to  get inequivalent
reductions one should use subalgebras from the so called optimal system (see Section~3.3 in~\cite{Olver1986}).

 Consider the general form of Lie symmetry operator $Q=\tau(t,x,u)\partial_t+\xi(t,x,u)\partial_x+\eta(t,x,u)\partial_u$, which forms a basis of the respective one-dimensional subalgebra from the constructed optimal system. The ansatz is then found as a solution of the invariant surface condition $Q[u]:=\tau u_t+\xi u_x-\eta=0$. In practice,  the corresponding characteristic system $\frac{{\rm d}t}{\tau}=\frac{{\rm d}x}{\xi}=\frac{{\rm d}u}{\eta}$ should be solved.

The kernel of the maximal Lie invariance algebras of equations from the class~\eqref{eq:genKawahara}   with $f_{uu}\neq0$
coincides with the one-dimensional algebra $\langle\partial_x\rangle$.
For each Case~$N$ of Table~\ref{Kawahara_tab:1} (the case numbering coincides with that of Table~\ref{Kawahara_tab:1}, and $L_N$ denotes the corresponding equation with the gauge $\alpha=1$) we construct the optimal system of one-dimensional subalgebras of $\mathcal A^{\max}$ following Section~3.3 in~\cite{Olver1986} and carry out the Lie reductions with respect to its elements (see also~\cite{VMZh2023}). The reductions with respect to the subalgebra $\langle\partial_x\rangle$, which belongs to every optimal system, lead to constant solutions only and are omitted. The results are collected in Table~\ref{Kawahara_tab:5}, where for each case we list the remaining subalgebras of the optimal system, the associated ans\"atze, the invariant variables $\omega$ and the reduced equations $RL$. For Cases~6 and~8 the subcases $\rho\neq-1$ and $\rho=-1$ are distinguished as $6.1$, $6.2$ and $8.1$, $8.2$, respectively, since the corresponding maximal Lie invariance algebras are of different structure (cf.\ Table~\ref{Kawahara_tab:1}); in the subcases $N.2$ the coefficients are $\beta=\lambda t^{-1}$ and $\gamma=\delta t^{-1}$. Below $a$ and $\rho$ are arbitrary constants, $\lambda$ and $\delta$ are nonzero constants, $\varepsilon\in\{-1,0,1\}$, and the prime denotes differentiation with respect to~$\omega$.

\begin{table}[t!]\small \renewcommand{\arraystretch}{1.8}
\begin{center}
\setcounter{tbn}{-1}
\refstepcounter{table}\label{Kawahara_tab:5}
\textbf{Table~\thetable.}
Lie reductions of the equations from the class~\eqref{eq:genKawahara}.
\\[2ex]
\setlength{\tabcolsep}{2pt}
\begin{tabular}{|c|l|c|c|l|}
\hline
no.&\hfil Subalgebra&\hfil Ansatz, $u=$&$\omega$&\hfil Reduced equation \\
\hline
$1$&$t\partial_t+x\partial_x$&$ \varphi(\omega)$&$\tfrac xt$&
$\delta\varphi'''''+\lambda\varphi'''+(f(\varphi)-\omega)\varphi'=0$\\
\hline
$2$&$\partial_t+a\partial_x$&$ \varphi(\omega)$&$x\!-\!at$&
$\delta\varphi'''''+\lambda\varphi'''+(f(\varphi)-a)\varphi'=0$\\
\hline
$3$&$(t\!+\!a)\partial_x+u\partial_u$&$ e^{\frac x{t+a}}\varphi(\omega)$&$t$&
$\varphi'+\tfrac{\varphi\ln\varphi}{\omega+a}+\tfrac{\beta(\omega)\,\varphi}{(\omega+a)^3}+\tfrac{\gamma(\omega)\,\varphi}{(\omega+a)^5}=0$\\
\hline
$4_a$&$t\partial_t+(x\!+\!at)\partial_x+au\partial_u$&$ t^{a}\varphi(\omega)$&$\tfrac xt\!-\!a\ln t$&
$\delta\varphi'''''+\lambda\varphi'''+(\ln\varphi-\omega-a)\varphi'+a\varphi=0$\\
\hline
$4_\varepsilon$&$(t\!+\!\varepsilon)\partial_x+u\partial_u$&$ e^{\frac x{t+\varepsilon}}\varphi(\omega)$&$t$&
$\varphi'+\tfrac{\varphi\ln\varphi}{\omega+\varepsilon}+\tfrac{\lambda\omega^2\varphi}{(\omega+\varepsilon)^3}+\tfrac{\delta\omega^4\varphi}{(\omega+\varepsilon)^5}=0$\\
\hline
$5$&$\partial_t$&$ \varphi(\omega)$&$x$&
$\delta\varphi'''''+\lambda\varphi'''+\ln\varphi\,\varphi'=0$\\
\hline
$5_a$&$a\partial_t+t\partial_x+u\partial_u,\ a\neq0$&$ e^{\frac ta}\varphi(\omega)$&$x\!-\!\tfrac{t^2}{2a}$&
$\delta\varphi'''''+\lambda\varphi'''+\ln\varphi\,\varphi'+\tfrac1a\varphi=0$\\
\hline
$5_0$&$t\partial_x+u\partial_u$&$ e^{\frac xt}\varphi(\omega)$&$t$&
$\varphi'+\tfrac{\varphi\ln\varphi}{\omega}+\tfrac{\lambda\varphi}{\omega^3}+\tfrac{\delta\varphi}{\omega^5}=0$\\
\hline
$6.1$&$3nt\partial_t+(\rho\!+\!1)nx\partial_x+(\rho\!-\!2)u\partial_u$&$ t^{\frac{\rho-2}{3n}}\varphi(\omega)$&$xt^{-\frac{\rho+1}3}$&
$\delta\varphi'''''+\lambda\varphi'''+\varphi^n\varphi'-\tfrac{\rho+1}3\omega\varphi'+\tfrac{\rho-2}{3n}\varphi=0$\\
\hline
$6.2$&$nt\partial_t+a\partial_x-u\partial_u$&$ t^{-\frac1n}\varphi(\omega)$&$x\!-\!\tfrac an\ln t$&
$\delta\varphi'''''+\lambda\varphi'''+\varphi^n\varphi'-\tfrac an\varphi'-\tfrac1n\varphi=0$\\
\hline
$7$&$3n\partial_t+nx\partial_x+u\partial_u$&$ e^{\frac t{3n}}\varphi(\omega)$&$xe^{-\frac t3}$&
$\delta\varphi'''''+\lambda\varphi'''+\varphi^n\varphi'-\tfrac13\omega\varphi'+\tfrac1{3n}\varphi=0$\\
\hline
$8.1$&$3t\partial_t+(\rho\!+\!1)x\partial_x+(\rho\!-\!2)\partial_u$&$ \varphi(\omega)+\tfrac{\rho-2}3\ln t$&$xt^{-\frac{\rho+1}3}$&
$\delta\varphi'''''+\lambda\varphi'''+e^{\varphi}\varphi'-\tfrac{\rho+1}3\omega\varphi'+\tfrac{\rho-2}3=0$\\
\hline
$8.2$&$t\partial_t+a\partial_x-\partial_u$&$ \varphi(\omega)-\ln t$&$x\!-\!a\ln t$&
$\delta\varphi'''''+\lambda\varphi'''+e^{\varphi}\varphi'-a\varphi'-1=0$\\
\hline
$9$&$3\partial_t+x\partial_x+\partial_u$&$ \varphi(\omega)+\tfrac t3$&$xe^{-\frac t3}$&
$\delta\varphi'''''+\lambda\varphi'''+e^{\varphi}\varphi'-\tfrac13\omega\varphi'+\tfrac13=0$\\
\hline
\end{tabular}
\\[1.5ex]
\parbox{160mm}{In each row the reduced equation is denoted by $RL$ with the subscript given in the first column, e.g.\ $RL_{6.1}$ for row~$6.1$ and $RL_{5,a}$ for row~$5_a$. }
\end{center}
\end{table}

Several of the reduced equations can be integrated completely. The equation $RL_3$ is a first-order linear ODE for $\ln\varphi$ and integrates to
$\varphi=\exp\left(\dfrac{C-\int\frac{\beta(\omega)}{(\omega+a)^2} {\rm d}\omega-\int\frac{\gamma(\omega)}{(\omega+a)^4} {\rm d}\omega}{\omega+a}\right)$,
which gives the exact solution of the equation
$$L_3\colon\quad u_t+\ln(u)\,u_x+\beta(t)\,u_{xxx}+\gamma(t)\,u_{xxxxx}=0$$
in the form
$$u=\exp\left(\dfrac{x+C-\int\frac{\beta(t)}{(t+a)^2}{\rm d}t-\int\frac{\gamma(t)}{(t+a)^4}{\rm d}t}{t+a}\right).$$
The equation $RL_{4,\varepsilon}$ is the particular case of $RL_3$ with $\beta=\lambda\omega^2$, $\gamma=\delta\omega^4$ and $a=\varepsilon$, and its integration yields the exact solution of the equation
$$L_4\colon\quad u_t+\ln(u)\,u_x+\lambda t^2 u_{xxx}+\delta t^4 u_{xxxxx}=0$$
in the form
$$u=\exp\left({\frac{x+C+\lambda\left(\frac{\varepsilon^2}{t+\varepsilon}+2\varepsilon\ln(t+\varepsilon)-t\right)+\delta\left(\frac{\varepsilon^2}3\frac{18t^2+30\varepsilon t+13\varepsilon^2}{(t+\varepsilon)^3}+4\varepsilon\ln(t+\varepsilon)-t\right)}{t+\varepsilon}}\right).$$
Similarly, the solution $\varphi=\exp\left(\frac{C\omega^3+3\lambda\omega^2+\delta}{3\omega^4}\right)$ of the equation $RL_{5,0}$ leads to the exact solution of the equation
$$L_5\colon\quad u_t+\ln(u)\,u_x+\lambda u_{xxx}+\delta u_{xxxxx}=0$$
in the form
$$u=\exp\left(\frac{(3x+C)t^3+3\lambda t^2+\delta}{3t^4}\right).$$

The reductions presented in Cases 2, 6.1, 6.2 and 7 are valid also for the case $f(u)=u$ (or $n=1$).
Following~\cite{KPV2014}, where this case was investigated in detail, we list three additional inequivalent reductions of the equations~\eqref{eq:genKawahara_u}
for the completeness of the presented results.

The ansatz $u=\varphi(\omega)+\frac{x}{t+a},$ where $\omega=t,$ constructed using the subalgebra
$ \langle (t+a)\partial_x+\partial_u\rangle$ reduces the equation~\eqref{eq:genKawahara_u}
to the first order ODE  $(\omega+a)\varphi'+\varphi=0$.
Its solution $\varphi={C}/{(\omega+a)}$ leads to the so-called `degenerate' solution $u= {(x+C)}/{(t+a)}$ of Eq.~\eqref{eq:genKawahara_u}.

The ansatz $u=2t/a+\varphi(\omega)$, where  $\omega=x-{t^2}/a$, $a\neq0$, arising from the subalgebra $\langle a\partial_t+2t\partial_x+2\partial_u\rangle$, reduces the constant coefficient equation~\eqref{eq:genKawahara_u} with $\beta=\lambda$ and $\gamma=\delta$
 to the ODE
 $\delta\varphi'''''+\lambda\varphi'''+\varphi\varphi'+2/a=0$.

The last case concerns the equation
\begin{eqnarray}\label{eq12}
 u_t+ uu_x+ \lambda(t^2\!+\!1)^{\frac12} e^{3\nu\arctan t} u_{xxx}+\delta(t^2\!+\!1)^{\frac32} e^{ 5\nu\arctan t}  u_{xxxxx}=0,
\end{eqnarray}
 where $\nu$ is an arbitrary constant. The ansatz $u=\frac{e^{\nu\arctan t}}{\sqrt{t^2+1}}\varphi(\omega)+\frac{xt}{t^2+1}$, where $\omega=\frac{xe^{-\nu\arctan t}}{\sqrt{t^2+1}}$, constructed using the subalgebra $\langle(t^2+1)\partial_t+(t+\nu)x\partial_x+(x+(\nu-t)u)\partial_u\rangle$ reduces Eq.~\eqref{eq12}
to the ODE  $\delta\varphi'''''+\lambda \varphi'''+(\varphi-\nu\omega)\varphi'+\nu\varphi+\omega=0$.

\bigskip
The classification of inequivalent  Lie reductions of equations~\eqref{eq:genKawahara} (resp.~\eqref{eq_ggKawahara}) is completed.
The obtained reductions may be used also for solving the related invariant boundary value problems  (see~\cite{vane2015a,vane2014a} for the details).

\section{Conservation laws}\label{sec:CL}

We describe the low-order local conservation laws of the equations of the
class~\eqref{eq_ggKawahara}, namely, all local conservation laws whose density
order does not exceed two. Point transformations that realize the
gaugings of Section~\ref{sec:gauging} induce one-to-one mappings between
local conservation laws (and their characteristics) of the corresponding
equations. It therefore suffices to work within the two gauged normalized
subclasses, the class~\eqref{eq:genKawahara} with $f_{uu}\neq0$ and the
class~\eqref{eq:genKawahara_u} for the case $f_{uu}=0$; the result for an arbitrary
equation of the class~\eqref{eq_ggKawahara} is obtained by pulling the
conserved vector back with the corresponding equivalence transformation.

We look for conservation laws in characteristic form, following the direct construction method of~\cite{AncoBluman2002},
\[
  \Dt T+\Dx X=\Lambda\,
  \bigl(u_t+f(u)u_x+\beta u_{xxx}+\gamma u_{xxxxx}\bigr),
\]
with characteristics $\Lambda$ of order at most four and densities of
order at most two. Below the characteristics are denoted by $Q$. In other
words, we classify conservation laws whose density order in the sense
of~\cite{PopovychSergyeyev2010} does not exceed two. For the reduced
representative of a nontrivial conservation law, the characteristic
order does not exceed twice the density order and hence is at most four.
In contrast to evolution equations of even order $2q$, for which the
order of any local conservation law is bounded by~$q$
(see~\cite{PopovychSergyeyev2010} and references therein), no universal
bound exists for odd-order evolution equations. The existence of an
infinite sequence of conservation laws of increasing order is therefore
a strong indication of integrability. For the constant-coefficient case,
the classification of higher-order and generalized conservation laws is
given in~\cite{Vasicek}; no independent local conservation laws occur
there beyond the low-order ones. Throughout,
$F(u)=\int^u f(s)\,ds$, $G(u)=\int^u s f(s)\,ds$, and
$H(u)=\int^u F(s)\,ds$.

We now give a sketch of the determining-equation calculation. Modulo a
locally trivial density, a density of order at most two can be written as
\begin{equation}\label{eq:CLansatz}
  T=A(t)u_{xx}^2+B(t)u_x^2+W(t,x,u).
\end{equation}
Substituting
$u_t=-f u_x-\beta u_{xxx}-\gamma u_{xxxxx}$ into $\Dt T$ and imposing
$\Eu(\Dt T|_{\mathrm{sol}})=0$ yields, from the coefficients of the
highest independent jet monomials,
\begin{align}
  B&=-\frac{\beta}{\gamma}A,\label{eq:CLdet1}\\
  W_{uu}&=2\frac{A}{\gamma}f(u)+C_0(t)+C_1(t)x,
  \qquad C_1=-\frac{2A_t}{5\gamma},\label{eq:CLdet2}\\
  2\beta\gamma A_t+5A\bigl(\gamma\beta_t-\beta\gamma_t\bigr)&=0.
  \label{eq:CLdet3}
\end{align}
Integrating~\eqref{eq:CLdet2} twice with respect to $u$, and omitting a
term independent of $u$, gives
\begin{equation}\label{eq:CLWform}
  W=2\frac{A}{\gamma}H(u)
    +\frac12\bigl(C_0+C_1x\bigr)u^2+E(t,x)u,
  \qquad H''=f.
\end{equation}
The remaining determining equation is
\begin{equation}\label{eq:CLclassifying}
\begin{aligned}
&\beta E_{xxx}+\gamma E_{xxxxx}+E_t
  +f(u)\bigl(C_1u+E_x\bigr)
  +2\left(\frac{A}{\gamma}\right)_t F(u)+\bigl(C_0'+C_1'x\bigr)u =0.
\end{aligned}
\end{equation}
Equations~\eqref{eq:CLdet1}--\eqref{eq:CLclassifying} provide a compact
complete determining system for the densities under consideration. Its
splitting gives the following alternatives.
\begin{enumerate}\itemsep2pt
\item If $A=0$ and the arbitrary elements are unrestricted, only constant
choices of $C_0$ and $E$ survive, yielding the two universal
laws~\eqref{eq:CLuniversal}.
\item If $A\neq0$ and $\beta,\gamma$ are constant, the system yields the
autonomous energy for arbitrary $f$.
\item If $f_{uu}\neq0$ and the dispersive coefficients are genuinely
time-dependent, compatibility of~\eqref{eq:CLdet3} and
\eqref{eq:CLclassifying} restricts $f$, up to $G^\sim$-equivalence, to a
power, an exponential, or a logarithm. For the power family the generic
scaling branch is supplemented by the exceptional reciprocal-power case
$n=-1$.
\item If $f_{uu}=0$, the system gives the Galilean law for arbitrary
$\beta,\gamma$ and, when $A\neq0$, the additional conditions
$\beta=\beta_0\gamma^{1/3}$ and
$\bigl(\gamma^{2/3}\bigr)'''=0$ defining the linear energy branch.
\end{enumerate}
Each resulting characteristic and density is stated below in closed
form. All displayed conserved vectors have been verified directly by
computing $\Eu(\Dt T|_{\mathrm{sol}})$ and $\Dt T+\Dx X$ symbolically. A
convenient tool for such computations is the package GeM for
Maple~\cite{Cheviakov}.

\paragraph{Universal conservation laws ($f,\alpha,\beta,\gamma$ arbitrary).}
Every equation of the class~\eqref{eq_ggKawahara} possesses the two
conservation laws
\begin{equation}\label{eq:CLuniversal}
  Q_1=1,\quad T_1=u,\qquad\qquad
  Q_2=u,\quad T_2=\tfrac12u^2,
\end{equation}
with the fluxes
\begin{align}
  X_1 &= \alpha F(u)+\beta\,u_{xx}+\gamma\,u_{xxxx}, \label{eq:CLX1}\\
  X_2 &= \alpha G(u)+\beta\bigl(u\,u_{xx}-\tfrac12u_x^2\bigr)
         +\gamma\bigl(u\,u_{xxxx}-u_x u_{xxx}+\tfrac12u_{xx}^2\bigr).
  \label{eq:CLX2}
\end{align}
Here $T_1$ is the mass (total wave amplitude) and $2\int T_2\,dx$ is the
$L^2$-norm squared; both fluxes are exact for the entire class, with the
nonlinearity~$f$ and the coefficient~$\alpha$ entering only through the terms
$\alpha F$ and $\alpha G$. The dispersive contributions to $X_1,X_2$ are
universal. In the gauge $\alpha=1$ used below, the factor $\alpha$ in these
fluxes disappears.

The remaining conservation laws with $\Lambda$ of order up to four appear
only for distinguished forms of $f,\beta,\gamma$, and these forms are
exactly the ones singled out by the Lie symmetry classification of
Tables~\ref{Kawahara_tab:1} and~\ref{Kawahara_tab:2}. Below they are
presented in the gauge $\alpha=1$ (with, in addition, $\kappa_0=0$ for the
subclass with $f_{uu}=0$), that is, for the classes~\eqref{eq:genKawahara}
and~\eqref{eq:genKawahara_u}; the corresponding laws for the original
class~\eqref{eq_ggKawahara} are recovered by the pullback described at the
beginning of this section. We list them by the form of $f$.

\paragraph{Autonomous energy (Case~2 of Table~\ref{Kawahara_tab:1};
$\beta=\lambda$, $\gamma=\delta$ constant, arbitrary $f$).}
Whenever the coefficients are constant, the equation is autonomous and
admits the Hamiltonian energy
\begin{equation}\label{eq:CLenergyconst}
  Q=\delta\,u_{xxxx}+\lambda\,u_{xx}+F(u),\qquad
  T=\tfrac12\bigl(\delta\,u_{xx}^2-\lambda\,u_x^2\bigr)+H(u),
  \quad H''=f,
\end{equation}
where $H(u)=\int^{u}\!F(s)\,ds$. This law exists for any
nonlinearity $f$. For $f=u^n$,
\[
 H=\frac{u^{n+2}}{(n+1)(n+2)}\quad(n\neq-1,-2),\qquad
 H=u\ln u -u\quad(n=-1),\qquad H=-\ln u\quad(n=-2).
\]
For $f=e^u$ one has $H=e^u$, whereas for $f=\ln u$,
$H=\tfrac12u^2\ln u-\tfrac34u^2$ (the last term is a multiple of the
quadratic density $T_2$).

\paragraph{Power nonlinearity $f=u^n$ ($n\neq0,1$).}
For constant $\beta=\lambda$ and $\gamma=\delta$, the autonomous
energy~\eqref{eq:CLenergyconst} is admitted for every $n\neq0,1$.
Apart from this autonomous case, the nonconstant-coefficient laws split
as follows.

If $n\neq-1$, the coefficients belong to the one-parameter subfamily of
Case~6.1,
\begin{equation}\label{eq:powerbranch}
  \beta=\lambda t^\rho,\qquad
  \gamma=\delta t^{(4-n)/(n+1)},\qquad
  \rho=\frac{2-n}{n+1}.
\end{equation}
The characteristic is
\begin{equation}\label{eq:powerQ}
  Q=(n+1)t\bigl(\gamma u_{xxxx}+\beta u_{xx}\bigr)
    +t u^{n+1}-xu.
\end{equation}
For $n\neq-2$ a corresponding density is
\begin{equation}\label{eq:powerT}
  T=\frac{n+1}{2}t\bigl(\gamma u_{xx}^2-\beta u_x^2\bigr)
    +\frac{t}{n+2}u^{n+2}-\frac12xu^2.
\end{equation}
The exponent $n=-2$ belongs to the same coefficient branch, namely
\begin{equation}\label{eq:powerMinusTwoBranch}
  \beta=\lambda t^{-4},\qquad \gamma=\delta t^{-6},
\end{equation}
but the potential term in~\eqref{eq:powerT} has to be replaced by its
logarithmic limit. In this case
\begin{equation}\label{eq:powerMinusTwoQT}
\begin{aligned}
  Q&=-t\bigl(\gamma u_{xxxx}+\beta u_{xx}\bigr)+\frac{t}{u}-xu,\\
  T&=-\frac12t\bigl(\gamma u_{xx}^2-\beta u_x^2\bigr)
     +t\ln u -\frac12xu^2.
\end{aligned}
\end{equation}

The exceptional reciprocal-power value $n=-1$ does not arise from
\eqref{eq:powerbranch}, since that formula divides by $n+1$. Instead, it
selects the exponential-coefficient Case~7,
\begin{equation}\label{eq:reciprocalbranch}
  f=u^{-1},\qquad \beta=\lambda e^t,\qquad
  \gamma=\delta e^{5t/3},
\end{equation}
for which an additional conservation law is
\begin{equation}\label{eq:reciprocalQT}
\begin{aligned}
  Q={}&3\gamma u_{xxxx}+3\beta u_{xx}+3\ln u -xu+t,\\
  T={}&\frac32\bigl(\gamma u_{xx}^2-\beta u_x^2\bigr)
       +3\bigl(u\ln u -u\bigr)-\frac12xu^2+tu.
\end{aligned}
\end{equation}
These formulas are understood locally on domains where $u\neq0$.
There are no other second-order conservation laws for the power
nonlinearity with nonconstant $\beta(t),\gamma(t)$.

\paragraph{Exponential nonlinearity $f=e^u$.}
A second-order conservation law exists for constant $\beta,\gamma$
(the autonomous energy~\eqref{eq:CLenergyconst} with $H=e^u$) and,
in addition, for the subfamily of Case~8.2,
\begin{equation}\label{eq:expbranch}
  \beta=\lambda\,t^{-1},\qquad \gamma=\delta\,t^{-1},
\end{equation}
with
\begin{equation}\label{eq:expQT}
  Q=2\bigl(\delta\,u_{xxxx}+\lambda\,u_{xx}+t\,e^{u}-x\bigr),\qquad
  T=\delta\,u_{xx}^2-\lambda\,u_x^2+2t\,e^{u}-2x\,u.
\end{equation}
Under the Type~A reparameterization $n=\sigma\tilde n$ of
Section~\ref{sec:contractions} below, applied together with the associated
transformation of the dependent variable, the power coefficient
branch~\eqref{eq:powerbranch} tends to the branch~\eqref{eq:expbranch} as
$n\to\infty$, since $\rho=(2-n)/(n+1)\to-1$; this is the conservation-law
counterpart of the Type~A contraction.

\paragraph{Logarithmic nonlinearity $f=\ln u$.}
A second-order conservation law exists for constant $\beta,\gamma$
(the autonomous energy~\eqref{eq:CLenergyconst}) and, in addition,
for the coefficients of Case~4 of Table~\ref{Kawahara_tab:1},
\begin{equation}\label{eq:logbranch}
  \beta=\lambda\,t^{2},\qquad \gamma=\delta\,t^{4},
\end{equation}
with
\begin{equation}\label{eq:logQT}
\begin{aligned}
  Q&=2\bigl(\delta t^5u_{xxxx}+\lambda t^3u_{xx}
       +t u\ln u-xu\bigr),\\
  T&=\delta t^5u_{xx}^2-\lambda t^3u_x^2
    +t u^2\ln u-\bigl(x+\tfrac{t}{2}\bigr)u^2.
\end{aligned}
\end{equation}

\paragraph{Linear nonlinearity $f=u$ (class~\eqref{eq:genKawahara_u}).}
Two families arise. First, for arbitrary $\beta(t),\gamma(t)$ there is
the Galilean conservation law
\begin{equation}\label{eq:linGal}
  Q=x-t\,u,\qquad T=x\,u-\tfrac12\,t\,u^2,
\end{equation}
whose flux is
\begin{equation}\label{eq:linGalX}
\begin{aligned}
  X = \tfrac12\,x\,u^2 - \tfrac13\,t\,u^3
    &+ \beta\bigl(x\,u_{xx}-u_x-t\,u\,u_{xx}+\tfrac12 t\,u_x^2\bigr)\\
    &+ \gamma\bigl(x\,u_{xxxx}-u_{xxx}-t\,u\,u_{xxxx}+t\,u_x u_{xxx}
       -\tfrac12 t\,u_{xx}^2\bigr).
\end{aligned}
\end{equation}
This law corresponds to the kernel generator $t\partial_x+\partial_u$ of
Table~\ref{Kawahara_tab:2}. Second, a dispersive energy law exists
if and only if the coefficients satisfy
\begin{equation}\label{eq:linenergycond}
  \beta=\beta_0\,\gamma^{1/3},\qquad
  \bigl(\gamma^{2/3}\bigr)'''=0,
\end{equation}
that is, $\gamma^{2/3}$ is a quadratic polynomial in $t$. Since $\gamma$ is
only assumed to be nonvanishing and may be negative, fractional powers of
$\gamma$ are interpreted here via the real cube root, so that
$P(t):=\gamma^{2/3}=|\gamma|^{2/3}>0$ and $\gamma^{5/3}=\gamma P$;
equivalently, $\gamma=\hat\delta\,(a\,t^2+b\,t+c)^{3/2}$ with a constant sign
factor $\hat\delta\in\{-1,1\}$ and the quadratic $a\,t^2+b\,t+c=P$ positive on
the interval considered. With this convention the characteristic and density
can be chosen as
\begin{equation}\label{eq:linenergyT}
\begin{aligned}
  Q={}&2P\bigl(\gamma u_{xxxx}+\beta u_{xx}\bigr)
       +Pu^2-P'xu+\tfrac12P''x^2,\\
  T={}&\gamma^{5/3}u_{xx}^2-\beta_0\gamma u_x^2
    +\tfrac13Pu^3-\tfrac12P'xu^2+\tfrac12P''x^2u.
\end{aligned}
\end{equation}
The condition~\eqref{eq:linenergycond} is covariant under the equivalence
group $G^\sim_1$ of the class~\eqref{eq:genKawahara_u}, whose action on
$t$ is by M\"obius transformations; the three $G^\sim_1$-inequivalent
canonical representatives, distinguished by the sign of the discriminant
of the quadratic $P$, are, up to the constant sign factor $\hat\delta$,
\begin{equation}\label{eq:linenergyreps}
  \gamma=\mathrm{const}\ \ (\text{Case }3),\qquad
  \gamma=t^{3/2}\ \ (\text{Case }1|_{\rho=1/2}),\qquad
  \gamma=(t^2+1)^{3/2}\ \ (\text{Case }4|_{\nu=0}),
\end{equation}
which coincide with the corresponding cases of
Table~\ref{Kawahara_tab:2}. In particular, the irreducible
representative $\gamma=(t^2+1)^{3/2}$ carries a genuine conservation law;
it is absent from a list restricted to the reducible power-type
solutions $\gamma=(c_1t+c_0)^{3/2}$, $(c_1t+c_0)^{3}$, $\mathrm{const}$.

\paragraph{Fluxes.}
The fluxes~\eqref{eq:CLX1}, \eqref{eq:CLX2} and~\eqref{eq:linGalX} are
recorded in closed form above. For the energy-type
densities~\eqref{eq:CLenergyconst}, \eqref{eq:powerT},
\eqref{eq:powerMinusTwoQT}, \eqref{eq:reciprocalQT}, \eqref{eq:expQT},
\eqref{eq:logQT} and~\eqref{eq:linenergyT}, the fluxes are obtained by
inverting the total derivative $\Dx$ on $-\Dt T|_{\text{sol}}$. All of
them are conveniently expressed through the universal energy flux block
\begin{equation}\label{eq:CLXE}
\begin{aligned}
  \mathcal X_E={}&\tfrac12F^2
  +\beta\bigl(F u_{xx}-f u_x^2\bigr)
  +\gamma\bigl(F u_{xxxx}+f u_{xx}^2-f u_x u_{xxx}+f_u u_x^2 u_{xx}\bigr)\\
  &+\tfrac12\beta^2\bigl(u_{xx}^2-2u_x u_{xxx}\bigr)
  +\beta\gamma\bigl(2u_{xx}u_{xxxx}-u_{xxx}^2-u_x u_{xxxxx}\bigr)\\
  &+\tfrac12\gamma^2\bigl(2u_{xx}u_{xxxxxx}-2u_{xxx}u_{xxxxx}+u_{xxxx}^2\bigr),
\end{aligned}
\end{equation}
where $F_u=f$, and the lower-order correction
\begin{equation}\label{eq:CLY}
  \mathcal Y=\beta\,u u_x+\gamma\bigl(u u_{xxx}-2u_x u_{xx}\bigr);
\end{equation}
in~\eqref{eq:CLXE} and~\eqref{eq:CLY} the functions $\beta$ and $\gamma$
stand for the coefficients of the corresponding equation, and $X_1$,
$X_2$ below denote the fluxes~\eqref{eq:CLX1}, \eqref{eq:CLX2} with the
same $\beta$, $\gamma$ and with $F$, $G$ specified by the corresponding
nonlinearity. The fluxes associated with the displayed energy-type densities are then
\begin{gather}
  X=\mathcal X_E
  \qquad\text{for~\eqref{eq:CLenergyconst}, with arbitrary $f$ and
  $\beta=\lambda$, $\gamma=\delta$;}\label{eq:XenergyConst}\\
  X=(n+1)t\,\mathcal X_E-xX_2+\mathcal Y
  \quad\text{for~\eqref{eq:powerT}, }n\neq-1,-2,0,1,\quad
  F=\frac{u^{n+1}}{n+1},\quad G=\frac{u^{n+2}}{n+2};
  \label{eq:XenergyPower}\\
  X=-t\,\mathcal X_E-xX_2+\mathcal Y
  \quad\text{for~\eqref{eq:powerMinusTwoQT}, with }
  f=u^{-2},\ F=-u^{-1},\ G=\ln u ;
  \label{eq:XenergyPowerMinusTwo}\\
  X=3\mathcal X_E-xX_2+tX_1+\mathcal Y
  \quad\text{for~\eqref{eq:reciprocalQT}, with }
  f=u^{-1},\ F=\ln u ,\ G=u;
  \label{eq:XenergyReciprocal}\\
  X=2t\,\mathcal X_E-2x\,X_1+2\bigl(\beta u_x+\gamma u_{xxx}\bigr)
  \qquad\text{for~\eqref{eq:expQT}, with }f=F=e^u;
  \label{eq:XenergyExp}\\
  X=2t\,\mathcal X_E+2(t-x)X_2+2\mathcal Y
  \qquad\text{for~\eqref{eq:logQT}, with }
  f=\ln u,\ F=u\ln u-u,\ G=\tfrac12u^2\ln u-\tfrac14u^2;
  \label{eq:XenergyLog}\\
\begin{aligned}
  X={}&2P\,\mathcal X_E-P'x\,X_2+\tfrac12P''x^2X_1
  +P'\,\mathcal Y+P''\bigl(\beta u+\gamma u_{xx}\bigr)
  -P''x\bigl(\beta u_x+\gamma u_{xxx}\bigr)\\
  &\text{for~\eqref{eq:linenergyT}, with }
  f=u,\ F=\tfrac12u^2,\ G=\tfrac13u^3,\
  \gamma=P^{3/2},\ \beta=\beta_0\gamma^{1/3}.
\end{aligned}\label{eq:XenergyLin}
\end{gather}
All the expressions~\eqref{eq:XenergyConst}--\eqref{eq:XenergyLin} have
been verified symbolically. In the formal limit $\delta\to0$ of the
constant-coefficient case~\eqref{eq:XenergyConst}, which leads outside the
class under consideration, the flux reduces to the classical energy flux of
the Korteweg--de Vries-type equations,
$X=\tfrac12F^2+\lambda(Fu_{xx}-fu_x^2)+\tfrac12\lambda^2(u_{xx}^2-2u_xu_{xxx})$.

\paragraph{Global balance and physical meaning.}
Integrating $\Dt T+\Dx X=0$ over a domain $\Omega\subseteq\mathbb R$ on
which the flux vanishes at the boundary (e.g., for $u$ decaying at
infinity) gives $\tfrac{d}{dt}\int_\Omega T\,dx=0$. Thus $T_1$ gives the
conserved mass and $T_2$ gives one half of the squared $L^2$ norm. The
autonomous and nonautonomous energy-type densities displayed above
balance the third- and fifth-order dispersive contributions
$-\beta u_x^2$ and $\gamma u_{xx}^2$ against a nonlinear potential and,
when present, moment terms. The Galilean law~\eqref{eq:linGal} separately
tracks the linear-in-$x$ moment of the profile.

Although the equations under consideration are not Euler--Lagrange
equations and their point symmetries do not automatically generate
conservation laws, the distinguished logarithmic and exponential
symmetry-extension cases also carry genuine second-order conservation
laws, and these organize themselves across the nonlinearities by the same
contractions as the symmetry algebras
(Section~\ref{sec:contractions}). The results agree with, and extend,
the conservation laws reported for particular subclasses
in~\cite{KPV2014, GandariasRosaRecioAnco}.

\begin{remark}
Low-order local conservation laws of the class~\eqref{eq_ggKawahara}
were also considered in~\cite{GandariasRosaRecioAnco} by the direct
multiplier method, using the gauge in which the coefficient of
$u_{xxxxx}$ is one instead of the gauge $\alpha=1$ used here. The two
descriptions are related by the time reparameterization
$\tilde t=-\int\gamma(t)\,dt$. After this gauge conversion, the universal
laws~\eqref{eq:CLuniversal}, the autonomous energy
\eqref{eq:CLenergyconst}, the generic power-law branch
\eqref{eq:powerbranch}--\eqref{eq:powerT} for $n\neq-1,-2$, its
$n=-2$ limit~\eqref{eq:powerMinusTwoBranch}--\eqref{eq:powerMinusTwoQT}, the Galilean law
\eqref{eq:linGal}, and the linear energy branch
\eqref{eq:linenergycond}--\eqref{eq:linenergyT} agree with the
corresponding results of~\cite{GandariasRosaRecioAnco}. Their integral
form of the power-law density also covers $n=-2$, since the nonlinear
potential is then interpreted as $\ln u$.

The exceptional reciprocal exponent $n=-1$ is not covered by the
power-law formula in~\cite{GandariasRosaRecioAnco}, which contains a
factor $(n+1)^{-1}$, and no separate law was given there. The
nonautonomous logarithmic and exponential laws are absent there as well.
Consequently, the reciprocal-power law~\eqref{eq:reciprocalQT}, the
logarithmic law~\eqref{eq:logQT}, and the exponential law
\eqref{eq:expQT} are the three additional inequivalent nonlinear
branches supplied by the present classification. The logarithmic and
exponential coefficient branches are limiting positions of the generic
power family: the exponents
$\rho=(2-n)/(n+1)$ and $(4-n)/(n+1)$ tend respectively to $(2,4)$ as
$n\to0$ and to $(-1,-1)$ as $n\to\infty$.

Conservation laws for several constant-coefficient Kawahara-type
equations were constructed earlier in~\cite{Leites} by the nonlinear
self-adjointness technique and Ibragimov's theorem. The conserved vectors
obtained there are symmetry-generated counterparts of the mass,
quadratic and energy laws above and reduce to them up to locally trivial
(total-derivative) terms, consistently with that construction yielding
particular conservation laws rather than a complete classification.
\end{remark}

\section{Contractions}
\label{sec:contractions}

Nontrivial limit processes between equations admitting Lie symmetry
extensions of different structure have been known for a long time. In
particular, in the group classification of nonlinear diffusion and
reaction--diffusion equations the cases with exponential nonlinearities
can be derived from those with power nonlinearities by means of such
limit processes. By analogy with the theory of Lie algebras, where the
corresponding limits between algebras go back to In\"on\"u and Wigner,
such connections between classification cases are called contractions.
A theoretical background on contractions of differential equations, of
their Lie invariance algebras and of their solutions was first discussed
in~\cite{IPS2007II}; contractions accompanying group classification
within the equivalence-based framework were studied for variable
coefficient mKdV and reaction--diffusion equations
in~\cite{vane2012a,VPS2012}. We say that a case of Lie symmetry
extension contracts to another case if there exists a one-parameter
family of equations from the class, obtained by an appropriate
reparameterization of the variables and of the constant parameters
involved, such that in the corresponding limit the equations, their
maximal Lie invariance algebras, the associated ans\"atze and the
reduced equations transform consistently. For the contractions studied
below, in which the full maximal Lie invariance algebra of the source is
required to contract onto the full maximal algebra of the target, the
two dimensions must coincide.  This requirement restricts
the candidate pairs to the two types indicated in
Figure~\ref{fig:classification_final}. Throughout this section we work
exclusively with the gauged classes~\eqref{eq:genKawahara} and
\eqref{eq:genKawahara_u}. Thus, the coefficient of the nonlinear
transport term is fixed to one in both classes (and $\kappa_0=0$ in
the second class), and all contraction arrows refer directly to the
canonical cases of Tables~\ref{Kawahara_tab:1} and
\ref{Kawahara_tab:2}. In what follows we
use the notation $M.k\to N.l$ for the contraction of Case~$k$ of
Table~$M$ to Case~$l$ of Table~$N$; $\lambda$ and $\delta$ denote
fixed nonzero constants from the classification tables, whereas
$\sigma$ and $\varepsilon$ denote the contraction parameters.

\subsection*{Type~A: \texorpdfstring{$u^n\to e^u$}{u\^{}n to e\^{}u}}

Contractions of this type relate the cases with power nonlinearities to
those with exponential ones, namely, Cases~6--7 to Cases~8--9 of
Table~\ref{Kawahara_tab:1}, and they are induced by a family of
equivalence transformations. Consider the family of point
transformations parameterized by a constant $\sigma>0$,
\begin{equation}\label{eq:typeA}
  \tilde t=t,\qquad \tilde x=x,\qquad \tilde u=\sigma(u-1),
\end{equation}
which belong to the equivalence group of the
class~\eqref{eq:genKawahara}; the arbitrary elements and the constants
involved are transformed, wherever this is relevant, according to
\[
  \tilde f(\tilde u)=f(u),\qquad \tilde\beta=\beta,\qquad
  \tilde\gamma=\gamma,\qquad \tilde n=\frac n\sigma,\qquad
  \tilde\rho=\rho.
\]
We apply transformation~\eqref{eq:typeA} to an equation of Case~1.6
(resp.~1.7) and proceed to the limit $\sigma\to+\infty$. The
nonlinearity then behaves as
\[
  f(u)=u^{n}=\Bigl(1+\frac{\tilde u}{\sigma}\Bigr)^{\sigma\tilde n}
  \ \xrightarrow{\ \sigma\to+\infty\ }\ e^{\tilde n\tilde u},
\]
while the coefficients $\beta$ and $\gamma$ are not affected; the
residual constant $\tilde n$ can additionally be set equal to~1 by a
transformation from~$G^\sim$. Therefore, the limit equation is
exactly the equation of Case~1.8 (resp.~1.9) with the same value
of~$\rho$. The same limit process establishes contractions of the
corresponding maximal Lie invariance algebras as algebras of vector
fields. Indeed, $\partial_u=\sigma\,\partial_{\tilde u}$, so that
$u\partial_u=(\sigma+\tilde u)\partial_{\tilde u}$. After applying the
equivalence transformation that sets $\tilde n=1$, that is, with
$n=\sigma$, dividing the basis operators of Cases~1.6 and~1.7 by $n$ we
obtain in the limit
\begin{align*}
  3nt\partial_t+(\rho+1)nx\partial_x+(\rho-2)u\partial_u
  &\ \longrightarrow\ 3t\partial_t+(\rho+1)x\partial_x+(\rho-2)\partial_{\tilde u},\\
  3n\partial_t+nx\partial_x+u\partial_u
  &\ \longrightarrow\ 3\partial_t+x\partial_x+\partial_{\tilde u},
\end{align*}
i.e., exactly the basis operators of Cases~1.8 and~1.9, respectively.
This results in the contractions
\[
  1.6\to1.8,\qquad 1.7\to1.9.
\]

Contractions of equations and of their Lie invariance algebras induce
contractions of the related objects, in particular, of ans\"atze,
reduced equations and solutions. Consider the contraction $1.6\to1.8$ in
detail. Setting $\tilde\varphi=\sigma(\varphi-1)$, we map the ansatz of
Case~6.1, $u=t^{(\rho-2)/(3n)}\varphi(\omega)$ with
$\omega=xt^{-(\rho+1)/3}$, by transformation~\eqref{eq:typeA} to
\[
  \tilde u=\sigma\bigl(t^{(\rho-2)/(3n)}\varphi-1\bigr)
  =\tilde\varphi(\omega)+\frac{\rho-2}{3\tilde n}\ln t+O(\sigma^{-1})
  \ \xrightarrow{\ \sigma\to+\infty\ }\
  \tilde\varphi(\omega)+\frac{\rho-2}{3\tilde n}\ln t,
\]
which for $\tilde n=1$ is exactly the ansatz of Case~8.1 with the same
invariant variable~$\omega$. The reduced equation $RL_{6.1}$ is mapped
by the same substitution, $\varphi=1+\tilde\varphi/\sigma$ and
$n=\sigma\tilde n$, after the multiplication by $\sigma$, to the
equation
\[
  \delta\tilde\varphi'''''+\lambda\tilde\varphi'''
  +\Bigl(1+\frac{\tilde\varphi}{\sigma}\Bigr)^{\sigma\tilde n}\tilde\varphi'
  -\frac{\rho+1}{3}\,\omega\tilde\varphi'
  +\frac{\rho-2}{3\tilde n}\Bigl(1+\frac{\tilde\varphi}{\sigma}\Bigr)=0,
\]
and the limit $\sigma\to+\infty$ results in the equation
\[
  \delta\tilde\varphi'''''+\lambda\tilde\varphi'''
  +e^{\tilde n\tilde\varphi}\tilde\varphi'
  -\frac{\rho+1}{3}\,\omega\tilde\varphi'+\frac{\rho-2}{3\tilde n}=0,
\]
which for $\tilde n=1$ coincides with $RL_{8.1}$. Analogously, the
ansatz $u=e^{t/(3n)}\varphi(\omega)$ with $\omega=xe^{-t/3}$ of Case~7
contracts to the ansatz $\tilde u=\tilde\varphi(\omega)+t/3$ of Case~9,
and the reduced equation $RL_{7}$ contracts to $RL_{9}$.

Contractions of conservation laws can be considered at the level of
characteristics, of densities and fluxes, or of the corresponding
conserved quantities, allowing normalizations that depend on the
contraction parameter. The contractions constructed above agree with the
classification of conservation laws in Section~\ref{sec:CL}. In contrast
to the symmetry contractions, where $\rho$ is fixed, the
conservation-law contraction acts along the family~\eqref{eq:typeA}
restricted to the branch~\eqref{eq:powerbranch}, on which the exponent
$\rho=(2-n)/(n+1)$ is tied to~$n$. Setting $\tilde n=1$, i.e.\ $n=\sigma$,
both coefficients vary along the family,
$\beta_\sigma=\lambda t^{(2-\sigma)/(\sigma+1)}$ and
$\gamma_\sigma=\delta t^{(4-\sigma)/(\sigma+1)}$, and
$(\beta_\sigma,\gamma_\sigma)\to(\lambda t^{-1},\delta t^{-1})$ as
$\sigma\to+\infty$, which is exactly the branch~\eqref{eq:expbranch}.

The characteristics transform as follows. In terms of
$\tilde u=\sigma(u-1)$, the left-hand side of the source equation
factorizes as
\[
  u_t+u^nu_x+\beta_\sigma u_{xxx}+\gamma_\sigma u_{xxxxx}
  =\frac1\sigma\Bigl(\tilde u_t
   +\Bigl(1+\frac{\tilde u}{\sigma}\Bigr)^{\sigma}\tilde u_x
   +\beta_\sigma\tilde u_{xxx}+\gamma_\sigma\tilde u_{xxxxx}\Bigr),
\]
so a conserved current $(T,X)$ with the characteristic $Q$ acquires the
characteristic $Q/\sigma$ with respect to the transformed equation, and
after the normalization of the conservation law by the factor
$2\sigma/\tilde n=2\sigma$ the transformed characteristic equals
$2Q$ with $Q$ given by~\eqref{eq:powerQ}. Since the derivatives transform
as $u_{x}=\tilde u_{x}/\sigma$, etc., we obtain, using
$t\gamma_\sigma=\delta t^{5/(\sigma+1)}\to\delta$ and
$t\beta_\sigma=\lambda t^{3/(\sigma+1)}\to\lambda$,
\[
  2Q=2\,\frac{\sigma+1}{\sigma}\,t
     \bigl(\gamma_\sigma\tilde u_{xxxx}+\beta_\sigma\tilde u_{xx}\bigr)
  +2t\Bigl(1+\frac{\tilde u}{\sigma}\Bigr)^{\sigma+1}
  -2x-\frac{2x\tilde u}{\sigma}
  \ \xrightarrow{\ \sigma\to+\infty\ }\
  2\bigl(\delta\tilde u_{xxxx}+\lambda\tilde u_{xx}+te^{\tilde u}-x\bigr),
\]
which is exactly the characteristic of the exponential
law~\eqref{eq:expQT}. Similarly, the normalized density~\eqref{eq:powerT}
becomes
\[
  2\sigma T=\frac{\sigma+1}{\sigma}\,t
   \bigl(\gamma_\sigma\tilde u_{xx}^2-\beta_\sigma\tilde u_x^2\bigr)
  +\frac{2\sigma t}{\sigma+2}\Bigl(1+\frac{\tilde u}{\sigma}\Bigr)^{\sigma+2}
  -\sigma x-2x\tilde u-\frac{x\tilde u^2}{\sigma}.
\]
The summand $-\sigma x$ constitutes the trivial conservation law with the
conserved current $(-\sigma x,0)$ and zero characteristic, and modulo
this summand
\[
  2\sigma T\ \xrightarrow{\ \sigma\to+\infty\ }\
  \delta\tilde u_{xx}^2-\lambda\tilde u_x^2+2te^{\tilde u}-2x\tilde u.
\]
Thus the density~\eqref{eq:powerT}, normalized by the factor
$2\sigma/\tilde n$, contracts, up to a summand corresponding to a trivial
conservation law, to the exponential density~\eqref{eq:expQT}, and the
associated characteristics contract accordingly, with no trivial
correction required.
\begin{figure}[t!]
  \centering
  \begin{tikzpicture}[scale=0.6, transform shape]

    \tikzset{
      smooth curve/.style={thick, black},
      math label/.style={
        font=\Large\bfseries,
        execute at begin node=\boldmath
      },
      sub label/.style={
        font=\large\bfseries,
        execute at begin node=\boldmath
      },
      contraction arrow/.style={
        ->,
        >=stealth,
        thick,
        shorten >=4pt,
        shorten <=4pt,
        black
      }
    }

    \draw[smooth curve] (0, 0) ellipse (9.5 and 5.5);

    \node[math label, align=center] at (0, 4.35)
      {$f\not\sim u^n,\ e^u,\ \ln u,\ u$\\[-1mm]
       $\dim \mathcal{A}^{\max}\in\{1,2\}$};


    \draw[smooth curve, fill=green!5] (0, 2.3) ellipse (2.8 and 1.4);
    \node[math label] at (0, 2.7) {$f = \ln u$};
    \node[sub label] at (0, 1.9) {$\dim \mathcal{A}^{\max} \in\{2, 3\}$};

    \draw[smooth curve, fill=blue!5] (-5.3, -0.5) ellipse (2.8 and 1.4);
    \node[math label] at (-5.3, -0.1) {$f = u^n$};
    \node[sub label] at (-5.3, -0.9) {$\dim \mathcal{A}^{\max} \in\{1, 2\}$};

    \draw[smooth curve, fill=orange!5] (5.3, -0.5) ellipse (2.8 and 1.4);
    \node[math label] at (5.3, -0.1) {$f = e^u$};
    \node[sub label] at (5.3, -0.9) {$\dim \mathcal{A}^{\max} \in\{1, 2\}$};

    \draw[smooth curve, fill=yellow!10] (0, -3.3) ellipse (2.8 and 1.4);
    \node[math label] at (0, -2.9) {$f = u$};
    \node[sub label] at (0, -3.7) {$\dim \mathcal{A}^{\max} \in\{2, 3\}$};

    \draw[contraction arrow] (-2.5, -0.5) -- (2.5, -0.5)
        node[pos=0.22, anchor=south, font=\itshape\bfseries, text=black]
        {Type A};

    \draw[contraction arrow] (0, 0.9) -- (0, -1.9)
        node[pos=0.8, anchor=west, font=\itshape\bfseries, text=black,
        xshift=3pt] {Type B};

  \end{tikzpicture}
  \caption{Structure of the class with respect to the functional form of $f$,
  the possible dimensions of the maximal Lie invariance algebras, and the
  Type~A and Type~B contractions. The four inner ovals represent, up to
  equivalence transformations, the power, exponential, logarithmic, and
  linear subclasses. The complementary region consists of all other
  functional forms of $f$.}
  \label{fig:classification_final}
\end{figure}

\subsection*{Type~B: \texorpdfstring{$\ln u\to u$}{ln u to u}}

Contractions of this type relate Cases~3--5 of
Table~\ref{Kawahara_tab:1} to Cases~0, $1|_{\rho=2}$ and~3 of
Table~\ref{Kawahara_tab:2}. Both tables are written in the gauge in
which the coefficient of the nonlinear transport term equals one. We
therefore stay entirely within the gauged classes
\eqref{eq:genKawahara} and~\eqref{eq:genKawahara_u} and introduce no
additional coefficient in front of the nonlinear term.

Let $\varepsilon\ne0$. For fixed functions $\beta(t)$ and $\gamma(t)$,
consider the logarithmic equation
\begin{equation}\label{eq:typeBsource}
 u_t+\ln u\,u_x+\varepsilon^3\beta(t)u_{xxx}
 +\varepsilon^5\gamma(t)u_{xxxxx}=0.
\end{equation}
For generic $\beta$ and $\gamma$, equation~\eqref{eq:typeBsource}
belongs to Case~1.3. Taking
\[
 \beta(t)=\lambda t^2,\qquad \gamma(t)=\delta t^4
\]
gives a family in Case~1.4, whereas taking $\beta=\lambda$ and
$\gamma=\delta$ gives a family in Case~1.5. The factors
$\varepsilon^3$ and $\varepsilon^5$ merely rescale the nonzero
constants occurring in the corresponding rows of
Table~\ref{Kawahara_tab:1}.

Apply the one-parameter family of point transformations
\begin{equation}\label{eq:typeB}
 \tilde t=t,\qquad \tilde x=\frac{x}{\varepsilon},\qquad
 \tilde u=\frac{u-1}{\varepsilon}.
\end{equation}
For each fixed $\varepsilon\ne0$, transformation~\eqref{eq:typeB} is an
equivalence transformation of the class~\eqref{eq:genKawahara}.
Substitution into equation~\eqref{eq:typeBsource} and division by
$\varepsilon$ give
\begin{equation}\label{eq:typeBtransformed}
 \tilde u_{\tilde t}
 +\frac{\ln(1+\varepsilon\tilde u)}{\varepsilon}
       \tilde u_{\tilde x}
 +\beta(\tilde t)\tilde u_{\tilde x\tilde x\tilde x}
 +\gamma(\tilde t)\tilde u_{\tilde x\tilde x\tilde x\tilde x\tilde x}=0.
\end{equation}
Since
\[
 \frac{\ln(1+\varepsilon\tilde u)}{\varepsilon}
 =\tilde u+O(\varepsilon),
\]
the limit $\varepsilon\to0$ of~\eqref{eq:typeBtransformed} is
\[
 \tilde u_{\tilde t}+\tilde u\tilde u_{\tilde x}
 +\beta(\tilde t)\tilde u_{\tilde x\tilde x\tilde x}
 +\gamma(\tilde t)\tilde u_{\tilde x\tilde x\tilde x\tilde x\tilde x}=0,
\]
which belongs to the class~\eqref{eq:genKawahara_u}. The family
\eqref{eq:typeB} becomes singular in the limit, and the limiting
nonlinearity satisfies $f_{uu}=0$. Hence no nonsingular admissible
transformation connects the source and target equations, even though
for every nonzero value of the contraction parameter the intermediate
transformation is an equivalence transformation within the nonlinear
subclass.

The maximal Lie invariance algebras contract under the same family.
After appropriate $\varepsilon$-dependent rescalings of the basis
operators, the two generators of the logarithmic kernel contract as
\begin{gather*}
 \varepsilon\partial_x
 \  \longrightarrow\ \partial_{\tilde x},\qquad
 \varepsilon(t\partial_x+u\partial_u)
 \ \longrightarrow\ \tilde t\partial_{\tilde x}+\partial_{\tilde u}.
\end{gather*}
Thus, the two-dimensional algebra of Case~1.3 contracts to the kernel
algebra of Table~\ref{Kawahara_tab:2}. The additional generators in
Cases~1.4 and~1.5 contract as
\begin{gather*}
 3(t\partial_t+x\partial_x)
  \ \longrightarrow\ 3\tilde t\partial_{\tilde t}
   +3\tilde x\partial_{\tilde x},\qquad
 \partial_t
 \ \longrightarrow\ \partial_{\tilde t}.
\end{gather*}
The first limiting operator is the additional generator of Case~2.1
with $\rho=2$, and the second is the additional generator of
Case~2.3. We therefore obtain the contractions
\[
 1.3\to2.0,\qquad 1.4\to2.1|_{\rho=2},\qquad 1.5\to2.3.
\]

The associated ans\"atze and reduced equations contract consistently.
For the kernel reduction, the source subalgebra
$\langle(t+a)\partial_x+u\partial_u\rangle$ has the ansatz
\[
 u=\exp\!\left(\frac{x}{t+a}\right)\varphi(t).
\]
Putting $x=\varepsilon\tilde x$ and
$\varphi=1+\varepsilon\tilde\varphi$ gives
\[
 \tilde u=\frac{\exp(\varepsilon\tilde x/(t+a))
 (1+\varepsilon\tilde\varphi)-1}{\varepsilon}
 \longrightarrow \frac{\tilde x}{t+a}+\tilde\varphi(t),
\]
which is the degenerate ansatz for the linear equation. The source
reduced equation is
\[
 \varphi'+\frac{\varphi\ln\varphi}{t+a}
 +\frac{\varepsilon^3\beta(t)\varphi}{(t+a)^3}
 +\frac{\varepsilon^5\gamma(t)\varphi}{(t+a)^5}=0.
\]
Substituting $\varphi=1+\varepsilon\tilde\varphi$, dividing by
$\varepsilon$ and passing to the limit give
\[
 (t+a)\tilde\varphi'+\tilde\varphi=0.
\]

For the contraction $1.4\to2.1|_{\rho=2}$, consider the reduction with
respect to $t\partial_t+x\partial_x$. The source invariant and ansatz
are $\omega=x/t$ and $u=\varphi(\omega)$. Write
\[
 \omega=\varepsilon\tilde\omega,\qquad
 \tilde\omega=\frac{\tilde x}{\tilde t},\qquad
 \varphi(\varepsilon\tilde\omega)
   =1+\varepsilon\tilde\varphi(\tilde\omega).
\]
For the family~\eqref{eq:typeBsource}, the source reduced equation is
\[
 \varepsilon^5\delta\varphi'''''
 +\varepsilon^3\lambda\varphi'''
 +(\ln\varphi-\omega)\varphi'=0.
\]
Since
$\varphi^{(k)}(\varepsilon\tilde\omega)
 =\varepsilon^{1-k}\tilde\varphi^{(k)}(\tilde\omega)$, division by
$\varepsilon$ and passage to the limit give
\[
 \delta\tilde\varphi'''''+\lambda\tilde\varphi'''
 +(\tilde\varphi-\tilde\omega)\tilde\varphi'=0,
\]
which is the reduced equation for Case~2.1 at $\rho=2$.

Finally, consider the contraction $1.5\to2.3$. In the source algebra
take the $\varepsilon$-dependent generator
\[
 a_\varepsilon\partial_t+t\partial_x+u\partial_u,
 \qquad a_\varepsilon=\frac{a}{2\varepsilon}.
\]
After multiplication by $\varepsilon$, this operator contracts as
\[
 \varepsilon\left(a_\varepsilon\partial_t
 +t\partial_x+u\partial_u\right)
 \ \longrightarrow\
 \frac12\bigl(a\partial_{\tilde t}
 +2\tilde t\partial_{\tilde x}+2\partial_{\tilde u}\bigr).
\]
The source ansatz is
\[
 u=e^{t/a_\varepsilon}\varphi(\omega),\qquad
 \omega=x-\frac{t^2}{2a_\varepsilon}.
\]
With
\[
 \omega=\varepsilon\tilde\omega,
 \qquad \tilde\omega=\tilde x-\frac{\tilde t^2}{a},
 \qquad
 \varphi(\varepsilon\tilde\omega)
   =1+\varepsilon\tilde\varphi(\tilde\omega),
\]
we obtain
\[
 \tilde u\longrightarrow
 \frac{2\tilde t}{a}+\tilde\varphi(\tilde\omega).
\]
The source reduced equation is
\[
 \varepsilon^5\delta\varphi'''''
 +\varepsilon^3\lambda\varphi'''
 +\ln\varphi\,\varphi'
 +\frac{1}{a_\varepsilon}\varphi=0.
\]
After division by $\varepsilon$, it contracts to
\[
 \delta\tilde\varphi'''''+\lambda\tilde\varphi'''
 +\tilde\varphi\tilde\varphi'+\frac{2}{a}=0,
\]
which is the reduced equation associated with
$\langle a\partial_t+2t\partial_x+2\partial_u\rangle$ in
Section~\ref{sec:reductions}.

The remaining linear symmetry-extension cases do not arise as
full-algebra contractions from logarithmic cases. For Case~2 and
Case~4 of Table~\ref{Kawahara_tab:2}, no logarithmic source case has
both the same maximal-algebra dimension and a compatible coefficient
pair. For Case~1, the dilation generator obtained from Case~1.4 agrees
with the target generator only for $\rho=2$. Thus, Case~1 with
$\rho\ne2$, Case~2 and Case~4 lie outside the present contraction
scheme. The $\nu=0$ representative of Case~4 is nonetheless the
carrier of the irreducible linear energy law~\eqref{eq:linenergyreps}.

The Type~B limits are also compatible with the conservation-law
classification, allowing $\varepsilon$-dependent linear combinations
and normalizations. The transformation rule for characteristics follows
from the factorization used above: since the left-hand side of the
source equation~\eqref{eq:typeBsource} equals $\varepsilon$ times the
left-hand side of the transformed equation~\eqref{eq:typeBtransformed}
and $D_x=\varepsilon^{-1}D_{\tilde x}$, a conserved current $(T,X)$ of
\eqref{eq:typeBsource} with the characteristic $Q$ gives rise to the
conserved current $(T,X/\varepsilon)$ of~\eqref{eq:typeBtransformed}
with the characteristic $\varepsilon Q$; the jet variables transform as
$u=1+\varepsilon\tilde u$ and
$\partial_x^ku=\varepsilon^{1-k}\partial_{\tilde x}^k\tilde u$, $k\geqslant1$.

Consider the contraction $1.4\to2.1|_{\rho=2}$. For the
family~\eqref{eq:typeBsource} with $\beta=\lambda t^2$,
$\gamma=\delta t^4$, the logarithmic law~\eqref{eq:logQT} holds with
$\lambda$ and $\delta$ replaced by $\varepsilon^3\lambda$ and
$\varepsilon^5\delta$. Its transformed characteristic expands as
\[
  \varepsilon Q
  =2\varepsilon^2\bigl(\tilde t\,\tilde u-\tilde x\bigr)
   +O(\varepsilon^3),
\]
the terms of order $\varepsilon$ cancelling identically and the
dispersive terms contributing only at order $\varepsilon^3$, since
$\varepsilon\cdot\varepsilon^5\delta t^5u_{xxxx}
=\varepsilon^3\delta\tilde t^{\,5}\tilde u_{\tilde x\tilde x\tilde x\tilde x}$
and similarly for the third-order term. Hence, after the normalization
of the conservation law by the factor $-1/(2\varepsilon^2)$, the
characteristic tends to
$\tilde x-\tilde t\,\tilde u$, the characteristic of the Galilean
law~\eqref{eq:linGal}. The density expands as
\[
  T=-\frac{\tilde t}{2}-\varepsilon\tilde x
    +\varepsilon^2\bigl(\tilde t\,\tilde u^2-2\tilde x\tilde u\bigr)
    +O(\varepsilon^3),
\]
where the $\tilde u$-independent summands correspond to the trivial
conservation laws with the conserved currents $(-\tilde t/2,\tilde x/2)$ and
$(-\varepsilon\tilde x,0)$; modulo these summands, the same
normalization $-1/(2\varepsilon^2)$ yields in the limit
$\tilde x\tilde u-\tfrac12\tilde t\,\tilde u^2$, the density
of~\eqref{eq:linGal}. Thus the nonautonomous logarithmic
law~\eqref{eq:logQT} of Case~1.4 contracts to the Galilean
law~\eqref{eq:linGal} of Case~$2.1|_{\rho=2}$.

For the contraction $1.5\to2.3$, the autonomous logarithmic
energy~\eqref{eq:CLenergyconst} with $F=u\ln u-u$ and
$H=\tfrac12u^2\ln u-\tfrac34u^2$, taken for the
family~\eqref{eq:typeBsource} with $\beta=\lambda$, $\gamma=\delta$,
degenerates in the naive limit: its transformed characteristic is
$\varepsilon Q=-\varepsilon+O(\varepsilon^3)$, with the leading term
proportional to the mass characteristic. A recombination with the
universal laws is therefore necessary. Adding the mass law,
$Q\mapsto Q+Q_1$ and $T\mapsto T+T_1$, gives the exact expansions
\[
  \varepsilon\bigl(Q+Q_1\bigr)
  =\varepsilon^3\Bigl(
    \delta\tilde u_{\tilde x\tilde x\tilde x\tilde x}
    +\lambda\tilde u_{\tilde x\tilde x}
    +\tfrac12\tilde u^2\Bigr)+O(\varepsilon^4),
  \qquad
  T+T_1=\frac14
  +\varepsilon^3\Bigl(
    \tfrac12\bigl(\delta\tilde u_{\tilde x\tilde x}^2
    -\lambda\tilde u_{\tilde x}^2\bigr)
    +\tfrac16\tilde u^3\Bigr)+O(\varepsilon^4),
\]
so that, after removing the trivial constant summand and normalizing
the conservation law by $\varepsilon^{-3}$, the limit is exactly the
autonomous cubic energy of Case~2.3, that is,
\eqref{eq:CLenergyconst} for $f=u$ with $H=u^3/6$.

The universal laws~\eqref{eq:CLuniversal} require no limits at all:
$\varepsilon Q_1/\varepsilon=1$ and
$(T_1-1)/\varepsilon=\tilde u$ for the mass law, and the recombination
$Q_2-Q_1$ gives, exactly,
\[
  \frac{\varepsilon(Q_2-Q_1)}{\varepsilon^2}=\tilde u,
  \qquad
  \frac{T_2-T_1+\tfrac12}{\varepsilon^2}
  =\frac{(u-1)^2}{2\varepsilon^2}=\frac{\tilde u^2}{2}.
\]
Hence the space spanned by the universal laws~\eqref{eq:CLuniversal} is
preserved along every Type~B family. For the generic arrow $1.3\to2.0$,
however, this span is, by the classification of
Section~\ref{sec:CL}, the entire space of low-order conservation laws
of the source equations, and the transformed characteristic of an
arbitrary linear combination,
$\varepsilon(c_1Q_1+c_2Q_2)
=\varepsilon(c_1+c_2)+\varepsilon^2c_2\tilde u$,
has, for any $\varepsilon$-dependent coefficients and normalizations,
only limits lying in the span of $1$ and~$\tilde u$. The Galilean
characteristic $\tilde x-\tilde t\,\tilde u$ depends explicitly
on~$\tilde x$ and lies outside this span, so no additional logarithmic
source law exists that contracts to the target Galilean law.

\section{Conclusion}

In this paper the extended group analysis of the class~\eqref{eq_ggKawahara} of generalized Kawahara equations with time-dependent coefficients is carried out. The starting point is the description of the equivalence groupoid of this class. Since the class is not normalized, we partition it into two subclasses singled out by the conditions $f_{uu}\neq0$ and $f_{uu}=0$. Each of these subclasses is normalized in the extended generalized sense, and the corresponding equivalence groups $\hat G^{\sim}$ and $\hat G_0^{\sim}$ are constructed in Theorems~\ref{theorem_fuune0} and~\ref{theorem_fuu=0}. Knowledge of the structure of these groups allows us to choose the optimal gaugings of the arbitrary elements, $\alpha=1$ for the subclass with $f_{uu}\neq0$ and $\alpha=1$, $\kappa_0=0$ for the subclass with $f_{uu}=0$. These gaugings reduce the group classification problems for the two subclasses to those for the classes~\eqref{eq:genKawahara} and~\eqref{eq:genKawahara_u}, which are normalized in the usual sense. As a byproduct of the study of admissible transformations, we derive an easy-to-verify criterion of reducibility of a variable coefficient equation from the class~\eqref{eq_ggKawahara} to a constant coefficient one.

The results of the group classification are summarized in Tables~\ref{Kawahara_tab:1} and~\ref{Kawahara_tab:2}. For the equations with $f_{uu}\neq0$ the kernel of the maximal Lie invariance algebras is the one-dimensional algebra $\langle\partial_x\rangle$ and there exist exactly nine inequivalent cases of Lie symmetry extension, whereas for the equations with $f_{uu}=0$ the kernel is the two-dimensional algebra $\langle\partial_x,\,T(t)\partial_x+\partial_u\rangle$, $T_t=\alpha$ (represented by $\langle\partial_x,\,t\partial_x+\partial_u\rangle$ in the gauge $\alpha=1$) and there are four inequivalent cases of extension. In all the cases the maximal Lie invariance algebras are at most three-dimensional, and, in contrast to the earlier conference presentations~\cite{vane2020c,VMZh2023}, we identify their isomorphism types within the classification of low-dimensional Lie algebras. Since fitting a specific model to a gauged canonical form may be inconvenient in applications, we also derive, using the equivalence-based approach, the classification lists in which the coefficients are not simplified by point transformations (see Tables~\ref{Kawahara_tab:3} and~\ref{Kawahara_tab:4}). In particular, these tables involve the nonlocality $T=\int\alpha(t)\,{\rm d}t$ and additional constant parameters that are invisible at the level of canonical forms but unavoidable when symmetries of a given physical model are written out explicitly.
Then the exhaustive classification of Lie reductions with respect to the optimal systems of one-dimensional subalgebras is presented in a unified form for both subclasses, and the list of known exact solutions of variable coefficient Kawahara equations is extended with new ones constructed in closed form.

The classification of low-order conservation laws shows that every equation from the class~\eqref{eq_ggKawahara} admits conservation laws of mass and of the squared $L^2$ norm, while energy-type conservation laws exist only for distinguished coefficient branches, which agree with the cases singled out by the symmetry classification. The conservation laws obtained for the reciprocal-power, logarithmic and exponential nonlinearities supplement the classification of~\cite{GandariasRosaRecioAnco}, and the linear energy branch includes the representative with $\gamma=(t^2+1)^{3/2}$ that is not reducible to power-type coefficients.

We establish that the cases of Lie symmetry extension are linked by contractions of two types. Contractions of Type~A connect the power nonlinearity cases to the exponential ones within the subclass with $f_{uu}\neq0$ and are induced by a family of equivalence transformations. Contractions of Type~B connect the logarithmic nonlinearity cases to the linear ones and are genuine singular limits, since the two subclasses are disjoint and are not related by point transformations. In both types the equations, the maximal Lie invariance algebras, the ans\"atze and the reduced equations are shown to transform consistently in the limit. The conservation-law classification is compatible with these limits as well. On the distinguished coefficient branch~\eqref{eq:powerbranch}, along which the exponent $\rho$ varies with $n$, the Type~A limit maps the nonautonomous power-law conservation law to the exponential one. For Type~B, the nonautonomous logarithmic law of Case~1.4 contracts to the Galilean law of Case~$2.1|_{\rho=2}$, and the autonomous logarithmic energy of Case~1.5 contracts to the autonomous cubic energy of Case~2.3 after an appropriate recombination with the universal laws. The universal-law space is preserved along all Type~B families, whereas the generic contraction $1.3\to2.0$ has no additional logarithmic source law corresponding to the target Galilean law.

Beyond the scope of the present work, several directions remain open for future investigation. These include the classification of generalized and potential symmetries, as well as higher-order conservation laws, for the variable-coefficient case; the study of Kawahara-type models with coefficients depending also on the spatial variable; and applications of the derived reductions to invariant boundary-value problems. The constructed conservation laws may also be used to introduce potential systems and to test the accuracy of numerical simulations of the corresponding wave models.

\subsection*{Acknowledgements}
The authors would like to thank Prof. Roman Popovych for useful discussions.
This work was supported by a grant from the Simons Foundation (SFI-PD-Ukraine-00014586, O.V., A.Z.).


\end{document}